\documentclass[12pt]{article}

\usepackage{fullpage}
\usepackage[T1]{fontenc}
\usepackage{lmodern}
\usepackage{graphicx}
\usepackage{amssymb}
\usepackage{amsmath}
\usepackage{subcaption}
\usepackage[round,authoryear]{natbib}
\usepackage{color}
\usepackage{array}
\usepackage{rotating}
\usepackage{colortbl}
\usepackage{tikz}
\usetikzlibrary{patterns}
\usepackage{enumitem}
\usepackage{hyperref}
\usepackage{upgreek}
\usepackage{mathrsfs}
\usepackage{cleveref}

\definecolor{webgreen}{rgb}{0,0.4,0}
\definecolor{webbrown}{rgb}{0.6,0,0}
\definecolor{purple}{rgb}{0.5,0,0.25}
\definecolor{darkblue}{rgb}{0,0,0.7}
\definecolor{darkred}{rgb}{0.7,0,0}
\definecolor{darkgreen}{rgb}{0,0.7,0}
\hypersetup{colorlinks,citecolor=darkred,filecolor=black,linkcolor=darkblue,urlcolor=webgreen}

\newcommand{\ignore}[1]{}
\newtheorem{lemma}{{\sc Lemma}}
\newtheorem{remark}{{\sc Remark}}
\newtheorem{prop}{{\sc Proposition}}

\newtheorem{theorem}{{\sc Theorem}}
\newtheorem{defn}{{\sc Definition}}

\newtheorem{example}{{\sc Example}}

\newenvironment{proof}{\noindent {\bf \sl Proof\/}:\enspace}
{\hfill $\blacksquare{}$ \vspace{12pt}}
\usepackage{sectsty}
\sectionfont{\centering\normalfont\scshape}
\subsectionfont{\centering\normalfont}

\title{{\Large {\bf Screening with Advertisements}}
\thanks{I thank Bhavook Bharadwaj and Debasis Mishra for helpful comments and discussions.
}
}

\author{{\small Kolagani Paramahamsa
}~\thanks{
Independent Researcher; Email: \texttt{kolagani.paramahamsa@gmail.com}}}

\date{\small{\today}}

\begin{document}
\pagenumbering{roman}
\maketitle

\begin{abstract}
We investigate a seller’s revenue-maximizing mechanism in a setting where a desirable \emph{good} is sold together with an undesirable \emph{bad} (e.g., advertisements) that generates third-party revenue.
The buyer’s private information is two-dimensional: valuation for the good and willingness to pay to avoid the bad.
Following the duality framework of \citet{Dask17}, whose results extend to our setting, we formulate the seller’s problem using a transformed measure~$\mu$ that depends on the third-party payment~$k$.
We provide a near-characterization for optimality of three pricing mechanisms commonly used in practice—the Good-Only, Ad-Tiered, and Single-Bundle Posted Prices—and introduce a new class of tractable, interpretable two-dimensional orthant conditions on~$\mu$ for sufficiency.
Economically, $k$ yields a clean comparative static: low~$k$ excludes the bad, intermediate~$k$ separates ad-tolerant and ad-averse buyers, and high~$k$ bundles ads for all types.

\vspace{10em}

\noindent {\sc JEL Codes: } D82, D40, M37 \\

\noindent {\sc Keywords}: multidimensional mechanism design, advertisement pricing, optimal menu, revenue maximization
\end{abstract}

\newpage

\pagenumbering{arabic}


\section{Introduction}

Subscription and ad‐supported digital services have become central to the global media economy. 
Over‐the‐top (OTT) video alone generated about \$169 billion in combined subscription and advertising revenue in 2024.\footnote{\href{https://www.pwc.com/gx/en/industries/tmt/media/outlook.html}{PwC Global Entertainment and Media Outlook 2024–2028}.}  
Platforms now offer \emph{rich and varied pricing menus}, ranging from purely ad‐supported access to premium ad‐free subscriptions, and increasingly hybrid forms that combine both. 
Netflix, for instance, introduced an ad‐supported tier that reached roughly 70 million users within its first year,\footnote{\href{https://www.reuters.com/business/media-telecom/netflixs-ad-supported-tier-hits-70-million-users-2024-11-12/}{Reuters, ``Netflix’s ad-supported tier hits 70 million users,'' Nov 2024}.} while more than 40 percent of new U.S. streaming sign‐ups in 2025 chose ad‐supported options.\footnote{\href{https://mcusercontent.com/c0e4083810bdd7de705491008/files/8bb03385-6aaa-b14f-824d-ab4310cb2d40/Antenna_x_State_of_Subscriptions_Adds_and_Ads.pdf}{Antenna, \emph{State of Subscriptions: Adds \& Ads}, May 2025}.}
These developments highlight the economic relevance of hybrid monetization: combining subscriptions with advertising expands potential revenue but complicates incentive‐compatible pricing across heterogeneous users.

User heterogeneity arises from two sources---valuation of the service (the \emph{good}) and disutility from advertisements (the \emph{bad}).
At its core, this environment therefore poses a two‐dimensional screening problem. 
The platform earns revenue not only from user payments but also from the advertiser whenever ads are shown. 
This additional revenue channel fundamentally alters the platform's incentives relative to standard multi‐good models, where all type dimensions represent desirable goods.
Here, one dimension represents a bad, and allocating it generates external compensation. 
The resulting trade‐off between extracting surplus from ad‐averse users and monetizing ad‐tolerant ones reshapes the geometry of incentive constraints---and hence the structure of optimal revenue-maximizing mechanisms.
The following stylized example illustrates these ideas.

\begin{example}
\upshape
Consider two equally likely users of types $x=(0.5,-0.2)$ and $x'=(1,-0.6)$, 
where the first coordinate denotes willingness to pay for the service and the magnitude of the second denotes willingness to pay to \emph{avoid} advertisements.  
Utility is additively separable, so a user of type $(x_1,x_2)$ consuming $(a_1,a_2)$---the good and the bad, respectively---and paying $p$ derives utility $x_1 a_1 + x_2 a_2 - p$.  
A third party (advertiser) pays the platform $0.1$ whenever ads are shown.
We compare three pricing menus commonly used by platforms.

\emph{Ad‐Tiered Posted Price.}\;
Suppose the platform offers two options: an ad-supported plan with $(a_1,a_2)=(1,1)$ priced at $0.3$, and an ad-free plan with $(a_1,a_2)=(1,0)$ priced at $0.9$.
User~$x$ (ad‐tolerant) chooses the ad‐supported plan since
$0.5(1) - 0.2(1) - 0.3 \ge \max\{0.5(1) - 0.2(0) - 0.9, 0\}$, 
while user~$x'$ (ad‐averse) chooses the ad‐free plan since
$1(1) - 0.6(0) - 0.9 \ge \max\{1(1) - 0.6(1) - 0.3, 0\}$.  
The platform’s expected revenue is therefore
\begin{equation*}
0.5(0.3+0.1) + 0.5(0.9) = 0.65.
\end{equation*}

\emph{Good‐Only Posted Price.}\;
If only the ad‐free plan at price~$0.5$ were available, both users would subscribe, 
yielding expected revenue $0.5$.

\emph{Single‐Bundle Posted Price.}\;
If only the ad‐supported plan at price~$0.3$ were offered, both users would subscribe, 
generating revenue $0.3+0.1=0.4$.

Hence, for a third‐party payment of $0.1$, the ad‐tiered menu strictly dominates both alternatives while satisfying incentive compatibility.\footnote{All prices shown are revenue‐maximizing for their respective menus.} 
A similar calculation shows that, as the third-party payment $k$ increases, the ad-tiered menu ceases to be optimal; for $k>0.6$, the \emph{Single-Bundle Posted Price} yields higher revenue.
\end{example}

Our analysis formalizes this environment within a general multidimensional mechanism design framework.  
We model the platform’s revenue–maximization problem using the duality approach of \cite{Dask17} (henceforth DDT), originally developed for multi–good settings.  
In their formulation, the seller’s objective (expected revenue) takes the form $\int u\, d\mu$, where $u$ is the buyer’s indirect utility and $\mu$ is a transformed measure derived from the buyer’s type density.
Introducing third–party payments changes the geometry of this measure: $\mu$ now depends on both the user’s type and the advertiser’s compensation.  
Despite this modification—and the presence of a bad dimension capturing disutility from ads—DDT’s two central results, strong duality and the finite–menu characterization, carry over directly.

Building on that foundation, we develop a near‐characterization of optimality for the three pricing mechanisms described above, obtaining conditions that are analytically tractable and economically interpretable.
For the \emph{Good–Only Posted Price} mechanism, optimality requires the third–party payment to be low.
We provide additional necessary conditions on $\mu$ and show that, under a mild assumption (a closely related condition first appeared in \cite{mcafee_mcmillan_1988} in a similar context), adding an orthant condition on $\mu$ ensures sufficiency.
For the \emph{Single–Bundle Posted Price} mechanism, necessary conditions imply that the third–party payment must be high, and sufficiency again follows from a simple orthant test on $\mu$.
The \emph{Ad–Tiered Posted Price} mechanism can be optimal for a broader range of third–party payment, bridging the other two regimes.

\noindent {\bf Contribution.} Our results differ qualitatively from standard multi–good models.
In those settings, the transformed measure~$\mu$ assigns fixed positive and negative regions determined solely by the distribution of buyer types.
In our model, by contrast, the sign pattern of~$\mu$ shifts systematically with the level of third–party payment~$k$, giving rise to a clean comparative static.
As the third–party payment $k$ increases, the geometry of the transformed measure $\mu$ evolves in a transparent way: the sign pattern along the boundaries of the type space shifts, altering which incentive constraints bind.
At low $k$, ad exposure is unprofitable and the optimal mechanism excludes the bad entirely.
As $k$ rises, both ad-tolerant and ad-averse consumers become profitable in distinct ways, leading to the emergence of a two-tier menu that separates them.
When $k$ becomes large, differentiation collapses and all users optimally receive the ad-supported bundle.
This comparative static provides a unified view of the three practical mechanisms—Good-Only, Ad-Tiered, and Single-Bundle—showing how increasing third-party monetization reshapes incentives and allocations.

Methodologically, the paper contributes to the theory of two–dimensional screening by identifying a new class of tractable and interpretable sufficiency checks.  
We show precisely how full multivariate convexity prevents extending these to necessity, and how our orthant–based tests provide a practical simplification of DDT’s general conditions, which require verifying integrals against \emph{all} convex monotone functions.
Our sufficiency tests can be evaluated easily, yielding transparent geometric and economic insights into the structure of optimal mechanisms.

\subsection{Related Literature}

Our paper contributes to the literature on multidimensional screening.
Foundational work such as \cite{Rochet_Chone_1998} and \cite{Armstrong_1996} highlighted the analytical difficulties of characterizing optimal mechanisms when types are multidimensional.
Subsequent research (\cite{Manelli_Vincent_2006,Manelli_Vincent_2007, pavlov_2011}) has focused on identifying sufficient conditions under which simple mechanisms are optimal despite this complexity.
Our analysis follows this line of inquiry, extending it to environments that include a “bad’’ dimension and third–party compensation—features that yield results with clear economic interpretation.

A central reference for our analysis is the finite-menu characterization of DDT, which provides a general but computationally demanding test: optimality requires verifying an integral inequality for all convex–monotone functions.  
DDT also propose tractable sufficiency checks for the two-dimensional screening model that must be satisfied on $\mu$ for each dimension \emph{separately}.
We generalize this by developing two–dimensional orthant conditions on~$\mu$ that preserve tractability.

Our results also connect to the pure–bundling literature.
 \cite{Haghpanah_Hartline_2021} provide conditions for pure bundling to be optimal in multi–good settings; \cite{Menicucci_Hurkens_Jeon_2015} derive sufficient conditions using the DDT approach, under stronger assumptions (e.g., independence / one–dimensional restrictions). Our orthant–based sufficiency conditions are complementary: they are straightforward to verify for closed–form densities, and they naturally adapt to environments with a ``bad’’ dimension and third-party monetization.
 
 Closest to our setting, several papers examine ad-tiered pricing under simplified assumptions.
\citet{Prasad2003} and \citet{Kumar2009} study related models in which the seller offers ad-supported and premium access, but reduce heterogeneity to a single-dimensional index or assume predetermined pricing structures.
\citet{Sato_2019} likewise analyzes ad-tiered pricing in a one-dimensional framework where heterogeneity is only along the bad dimension, and shows that an ad-tiered menu can be optimal.
 
Finally, a related strand of research studies advertising markets and media platforms as two-sided markets, emphasizing advertiser-side equilibrium. 
For example, \citet{AndersonCoate2005} and \citet{PeitzValletti2008} study advertising intensity when ads impose a nuisance cost on consumers, while \citet{AndersonForosKind2018} examine how advertising interacts with platform business models (ad-financed, subscription-based, or hybrid).
More recently, \citet{GentzkowShapiroYangYurukoglu2024} analyze advertiser competition with audience multi-homing and test the model’s implications empirically. 
This literature focuses on advertiser-side equilibrium and welfare implications. 
In contrast, the present paper studies the user-side mechanism-design problem, treating ads as a contract attribute and characterizing optimal screening over subscription and advertising.

\noindent
The remainder of the paper is organized as follows.
\Cref{sec:model} introduces the model and establishes the mapping of DDT's results to our setting.
\Cref{sec:results} presents the main results—the necessary and sufficient conditions for the three mechanisms—and provides a unified economic and geometric interpretation of how the optimal mechanism evolves as the third-party payment varies.
\Cref{sec:discussion} discusses extensions and broader implications, including the relaxation of the constant third-party payment assumption.
\Cref{sec:conclusion} concludes.


\section{A Screening Model with Ads}\label{sec:model}
A seller offers one unit each of a good and a bad to a buyer. 
The buyer’s type $x=(x_1,x_2)$ captures the valuations for the good and the bad in $x_1$ and $x_2$, respectively. 
Given an allocation $a=(a_1,a_2)$ and payment $p$ to the seller, a buyer of type $x$ derives utility $x \cdot a - p$, where $a_1$ and $a_2$ correspond to the good and the bad allocations, respectively. 
The second dimension represents a disutility, so we restrict $x_2 \le 0$; its magnitude $|x_2|$ is interpreted the buyer’s maximum willingness to pay to avoid the bad.

The valuation vector $x$ is the buyer's private information. 
The seller's belief is that the type $x$ is drawn from $X = [\underline{x}_1,\overline{x}_1] \times [\underline{x}_2,\overline{x}_2] \subset \mathbb{R}_{+} \times \mathbb{R}_{-}$ with density function $f: X \to \mathbb{R}_{+}$ that is continuous and differentiable with bounded derivatives.
Consistent with the interpretation of $x_2$, the second coordinate of the type space $X$ is restricted to $\mathbb{R}_{-}$.

The seller commits to a mechanism in which the buyer participates.
This mechanism generates revenue from two distinct sources. 
First, from the buyer’s direct payment under the mechanism. 
Second, from a third party who pays an amount $\kappa(x)$ to the seller for allocation of the bad to a buyer of type $x$. 
The third--party payment function $\kappa: X \to \mathbb{R}_{+}$ is fixed a priori before the seller proposes a mechanism to the buyer. 
We assume that $\kappa$ is bounded, ruling out trivial cases.

In analyzing the seller's revenue-maximizing mechanism, we focus on direct mechanisms without loss of generality. 
A direct mechanism is a pair of functions $(q,t)$, where $q:X \to [0,1]^2$ is the allocation function and $t:X \to \mathbb{R}$ is the payment function. 
The first and second components of the allocation function $q$ represent the probability of allocations of the good and the bad, respectively.\footnote{Alternatively, these could be interpreted as quantities of the good and the bad.}
Hence, the expected utility of a buyer of type $x$ reporting $x’$ in mechanism $(q,t)$ is $x \cdot q(x’) - t(x’)$.

\begin{defn}[Incentive Compatibility]
A mechanism $(q,t)$ is incentive compatible (IC) if and only if 
\begin{equation*}
x \cdot q(x) - t(x) \;\geq\; x \cdot q(x') - t(x') \quad \forall \, x, x' \in X.
\end{equation*}
\end{defn}

\begin{defn}[Individual Rationality]
A mechanism $(q,t)$ is individually rational (IR) if and only if 
\begin{equation*}
x \cdot q(x) - t(x) \;\geq\; 0 \quad \forall \, x \in X.
\end{equation*}
\end{defn}

The utility function $u: X \to \mathbb{R}$ defined by $u(x) := x \cdot q(x) - t(x)$ represents the utility of a buyer of type $x$ under truthful reporting in the mechanism $(q,t)$.
It follows from \cite{rochet_1987} that $(q,t)$ is IC if and only if $u$ is convex and $q(x) = \nabla u(x)$ almost everywhere.

The seller’s revenue from type $x$ equals the buyer’s payment $t(x)$ plus the third--party payment $v_{\kappa(x)} \cdot q(x)$, where $v_{\kappa(x)} := (0,\kappa(x)).$ 
Since $t(x)=x\cdot q(x)-u(x)$ and $q(x)=\nabla u(x)$ a.e., revenue from type $x$ is
\begin{equation*}
(x + v_{\kappa(x)}) \cdot \nabla u(x) - u(x).
\end{equation*}

Feasibility of the allocation function implies $\nabla u(x) \in [0,1]^2$ a.e., while IR implies $u \geq 0$. 
Following DDT, we let $\mathcal{U}(X)$ be the set of convex, nondecreasing, continuous functions $u:X \to \mathbb{R}$ and $\mathcal{L}_1(X)$ denote the set of functions $u:X \to \mathbb{R}$ that are 1-Lipschitz with respect to the $\ell_1$ norm. 
We can then formulate the seller’s problem as:
\begin{center}
\begin{minipage}{0.6 \textwidth}
    \fbox{%
    \begin{minipage}{\dimexpr\linewidth-2\fboxsep-2\fboxrule}
    \vspace{-1pt}
        \begin{align}
        		\sup  & \int_{X} \big[(x + v_{\kappa(x)}) \cdot \nabla u(x) - u(x)\big] f(x) dx  \nonumber \\
                 \text{s. t.} \qquad & u \in \mathcal{U}(X) \cap \mathcal{L}_1(X) \nonumber\\
                & u(x) \geq 0, \forall x \in X \nonumber
        \end{align}
    \end{minipage}%
    }
\end{minipage}
\end{center}
We use the fact that the feasibility constraint $\nabla u(x) \in [0,1]^2$ is equivalent to requiring that $u$ is nondecreasing and 1-Lipschitz with respect to the $\ell_1$ norm. 
Convexity ensures continuity in the interior of $X$, which then extends to the boundary.

Denote $\underline{x} = (\underline{x}_1,\underline{x}_2)$. 
Following DDT, we normalize utilities by setting $u(\underline{x}) = 0$ without loss of generality.
Equivalently, the IR constraint can be omitted by writing the objective as
\begin{equation}\label{objIR}
\int_X \Big[(x + v_{\kappa(x)}) \cdot \nabla u(x) - \big(u(x) - u(\underline{x})\big)\Big] f(x)\,dx.
\end{equation}

We apply integration by parts (as in \cite{mcafee_mcmillan_1988}, with details in Appendix \ref{sec:intbyparts}) to the objective \eqref{objIR}, obtaining
\begin{equation}\label{objDiv}
\int_{\partial X} u(x) f(x) \big((x + v_{\kappa(x)}) \cdot \hat{n}\big) \, dx 
- \int_{X} u(x) \Big(\nabla f(x) \cdot (x + v_{\kappa(x)}) + (3 + \partial_2 \kappa(x)) f(x)\Big) \, dx 
\, + \, u(\underline{x}).
\end{equation}
Relative to the standard multi-good formulation, the only difference is the appearance of the additional $\kappa(x)$ terms.

We define a transformed measure $\mu$ on $X$ by\footnote{Since the objective is linear in continuous and compactly supported $u$, by Riesz’s representation theorem we are defining a Radon measure.}
\begin{defn}
For all measurable sets $A \subseteq X$,  
\begin{equation*}
\mu(A) := \int_{\partial X} \mathbb{I}_{A}(x) f(x) \big((x + v_{\kappa(x)}) \cdot \hat{n}\big) \, dx 
\,-\, \int_{X} \mathbb{I}_{A}(x)\big(\nabla f(x) \cdot (x + v_{\kappa(x)}) + (3 + \partial_2 \kappa(x)) f(x)\big) \, dx  
\,+\, \mathbb{I}_{A}(\underline{x}).
\end{equation*}
\end{defn}

Note that $\mu(X)=0$. 
This follows by substituting $u(x)=1$ into $\eqref{objIR}$ and $\eqref{objDiv}$.
This observation motivates the inclusion of the $u(\underline{x})$ term in the objective.
The normalization $\mu(X)=0$ will simplify the analysis.

Given the definition of $\mu$, the optimal program is
\begin{align}
        	\sup  & \int_{X} u(x) \, d\mu  \nonumber \\
         \text{s.t.} \qquad & u \in \mathcal{U}(X) \cap \mathcal{L}_1(X). \nonumber
\end{align}

Our formulation differs from that of DDT in two respects.
First, the second component of the type space $X$ is supported on $\mathbb{R}_-$. 
Second, the transformed measure includes additional terms involving $\kappa(x)$. 
For the main results in DDT to carry over, these differences are immaterial, since the key features driving the proofs remain unchanged.

\subsection{Characterization Result of DDT}

DDT establish a strong duality between the seller's problem and a transport formulation over $X\times X$.  
Because $X\subseteq\mathbb{R}^n$ is convex and compact, and the additional $\kappa(x)$ term enters linearly in $u$, their arguments extend verbatim.  
Hence, the strong-duality result continues to hold in our framework\footnote{\cite{Kleiner2019} provide an alternative linear-programming proof of strong duality and note that their argument extends to arbitrary signed Radon measures~$\mu$.}.

Building on this dual formulation, they characterize the optimality of finite-menu mechanisms using integral inequalities involving \emph{convex monotone test functions}.  
The same characterization applies directly in our setting.  
To state it, we recall two pieces of notation:
\begin{itemize}[itemsep=-2pt, topsep=1pt, partopsep=0pt]
\item For a signed measure $\mu$ and measurable $A\subseteq X$, the restriction of $\mu$ to $A$ is $\mu|_A(S)=\mu(A\cap S)$ for all measurable $S$.
\item The \emph{menu} of a mechanism $(q,t)$ is $\text{Menu}_{(q,t)}=\{(a,p):\exists\,x\in X\text{ such that }(a,p)=(q(x),t(x))\}$.
\end{itemize}

\begin{defn}
For $a\in[0,1]^2$, define $\vec v(a)\in\{-1,0,1\}^2$ by 
$v_i=1$ if $a_i=0$, $v_i=-1$ if $a_i=1$, and $v_i = 0$ if $a_i \in (0,1)$. 
A function $u:X\to\mathbb{R}$ is \emph{convex $\vec v(a)$-monotone} if it is convex, nondecreasing in each coordinate with $v_i=1$, and nonincreasing in each coordinate with $v_i=-1$.
\end{defn}

\begin{lemma}[DDT]\label{lem:finitechar}
Let $\mu$ be the transformed measure.  
A mechanism $(q,t)$ with a finite menu is optimal if and only if, for every $(a,p)\in\text{Menu}_{(q,t)}$,
\begin{equation*}
\int u\,d \mu |_{R} \;\leq\; 0
\end{equation*}
for all convex $\vec v(a)$-monotone functions $u$, 
where $R=\{x\in X:(q(x),t(x))=(a,p)\}$ is the set of types receiving $(a,p)$.
\end{lemma}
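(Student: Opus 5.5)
The plan is to go through the strong duality between the primal program $\sup\{\int u\,d\mu : u\in\mathcal{U}(X)\cap\mathcal{L}_1(X)\}$ and DDT's optimal-transport dual. The dual minimizes $\int \|x-y\|_1^+\,d\gamma(x,y)$ over nonnegative measures $\gamma$ on $X\times X$ whose marginal difference $\gamma_1-\gamma_2$ convex-dominates $\mu$. Here $\|z\|_1^+ = \sum_i \max(z_i,0)$, and convex domination means $\int u\, d(\gamma_1-\gamma_2)\ge \int u\,d\mu$ for all $u\in\mathcal{U}(X)$. As noted above, this duality carries over: $\kappa$ only changes $\mu$, $\mu(X)=0$ still holds, and $X$ is still a compact box. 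The sign restriction $x_2\le 0$ plays no role, since the argument is translation invariant.

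\emph{Sufficiency.} Suppose every region $R$ satisfies $\int u\,d\mu|_R\le 0$ for all convex $\vec v(a)$-monotone $u$. I would build a dual certificate $\gamma=\sum_R \gamma_R$ region by region. The condition says precisely that $0 \succeq_{cvx,\vec v(a)} \mu|_R$. By a Strassen-type theorem for this monotone-convex order, there is a transport plan $\gamma_R$ supported on $R\times R$ with $\gamma_{R,1}-\gamma_{R,2}\succeq_{cvx}\mu|_R$, with mass moved only in the directions permitted by $\vec v(a)$ (the mean-preserving-spread part plus a monotone shift). For this plan, $\|x-y\|_1^+ = (x-y)\cdot a$ on its support. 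The complementary slackness conditions then hold:
\begin{itemize}
\item $u^*(x)-u^*(y)=\nabla u^*\cdot(x-y)=a\cdot(x-y)$ on each $R$, because $u^*$ is affine there with gradient $a$;
\item $\int u^*\,d(\gamma_1-\gamma_2)=\int u^*\,d\mu$, which holds because $u^*$ is affine on $R$ and the spreads are mean-preserving in the coordinates where $a_i\in(0,1)$, while the monotone components move only along coordinates where $a_i\in\{0,1\}$, exactly matching the slope.
\end{itemize}
Weak duality then gives $\int u^* d\mu=\int\|x-y\|_1^+d\gamma\ge \int u\,d\mu$ for every feasible $u$, so $u^*$ is optimal.

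\emph{Necessity.} Suppose instead that some region $R$ and some convex $\vec v(a)$-monotone $w$ satisfy $\int w\,d\mu|_R>0$. I would perturb: set $u_\varepsilon = \max(u^*, u^*+\varepsilon \tilde w)$ locally, where $\tilde w$ is $w$ extended from $R$ and clipped so it only changes $u^*$ on $R$. Because the menu is finite, the regions are polyhedral and $u^*$ on $R$ is the affine function $a\cdot x-p$, which strictly exceeds the other menu items in the interior of $R$. The admissible monotonicity directions of $w$ keep $\nabla u_\varepsilon$ inside $[0,1]^2$: where $a_i=1$, $w$ is nonincreasing in coordinate $i$, so the slope stays at most $1$, and symmetrically where $a_i=0$. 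For small $\varepsilon$ the perturbation is therefore feasible and increases the objective, contradicting optimality.

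The main obstacle is the necessity step near the boundaries of $R$. A convex $w$ restricted to $R$ does not glue convexly to the neighbouring affine pieces. My first attempt would be to replace $w$ by $\max(0,w-c)$-type truncations and use the maximum of convex functions, $\max(u^*, a\cdot x - p+\varepsilon w)$. This is convex, agrees with $u^*$ away from $R$ for small $\varepsilon$ by the strict gap between menu items, and needs a limiting argument near $\partial R$. In practice, since this lemma is exactly DDT's Theorem, I would present the above as the transcription of their proof and verify only that each step uses nothing beyond the convexity and compactness of $X$ and the fact that $\mu$ is a signed Radon measure with $\mu(X)=0$.
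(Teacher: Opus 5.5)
Your proposal ends up where the paper does: the paper gives no independent proof of this lemma, only the remark that DDT's strong duality and finite-menu theorem carry over verbatim because $X$ is a compact convex box, the $\kappa$-terms enter $\mu$ linearly, and $\mu(X)=0$, which are the same checks you list. Your self-contained sketch goes beyond the paper, and if you keep it, note that on $R$ the perturbation $\max(u^*,a\cdot x-p+\varepsilon w)$ equals $u^*+\varepsilon\max(w,0)$, so you must first show $\mu(R)=0$ for every region (the choice $w\equiv 1$ gives $\mu(R)\le 0$ for each region, and $\mu(X)=0$ forces equality) and then shift $w$ so that it is nonnegative on $R$; this also requires Lipschitz test functions and no $|\mu|$-mass on the ties between regions, so that the $O(\varepsilon)$ boundary strip is negligible.
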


The convexity of the test functions reflects primal feasibility, while the complementary-slackness conditions in DDT's dual program identify the transport directions that bind those constraints—thereby inducing the coordinatewise monotonicity requirements in the test functions.


\section{Results}\label{sec:results}
We extend the analysis of DDT to the good–bad bundle environment.
To this end, we restrict attention to the case of a constant third--party payment,
\begin{equation*}
\kappa(x) = k \quad \text{for some } k \in \mathbb{R}_+,
\end{equation*}
so that the seller receives a fixed payment $k$ whenever the bad is allocated, independent of the buyer’s type.
This assumption reflects common practice in subscription markets, where advertiser payments depend on impressions rather than user characteristics.
While our arguments extend to more general $\kappa(x)$ (see \Cref{dis:k}), we adopt the constant case for clarity of exposition.

Under this restriction, let $v_k := (0,k)$. 
For all measurable sets $A \subseteq X$, the transformed measure $\mu$ is defined by
\begin{equation}\label{eq:transk}
\mu(A)
= \int_{\partial X} \mathbb{I}_{A}(x)\, f(x)\, \big((x + v_k)\!\cdot\!\hat{n}\big)\,dx
- \int_{X} \mathbb{I}_{A}(x)\, \big(\nabla f(x)\!\cdot\!(x + v_k) + 3f(x)\big)\,dx
+ \mathbb{I}_{A}(\underline{x}).
\end{equation}

\noindent {\bf Geometry of $\mu$ and dependence on $k$.}
As the third-party payment $k$ increases, the term $(x+v_k)\!\cdot\!\hat{n}$ changes the sign of $\mu$ along the top and bottom edges of $X$, which represent the disutility dimension $x_2$.
For small $k$, $\mu$ is positive at the bottom (ad-averse types) and negative at the top.
As $k$ rises, the top edge becomes positive, and for sufficiently large $k$, the bottom edge turns negative.
These sign reversals anticipate the three canonical mechanisms studied below and set up the detailed economic interpretation provided at the end of this section.

For our sufficiency results, we impose the following condition, first introduced by \cite{mcafee_mcmillan_1988}.
\begin{defn}\label{def:cond}
A density function $f$ satisfies \emph{\bf MM (McAfee--McMillan)} if
\begin{equation*}
\nabla f(x) \cdot (x + v_k) + 3 f(x) \;\geq\; 0 \quad \text{for all } x \in X.
\end{equation*}
\end{defn}

This condition ensures that the induced measure $\mu$ assigns nonpositive mass throughout the interior of~$X$, similar to assumptions commonly imposed in the multi–good mechanism design literature (\citealp{pavlov_2011, Menicucci_Hurkens_Jeon_2015, deb_bik}).
While the classical model imposes $\nabla f(x) \cdot x + 3f(x)\!\ge\!0$, our version adapts it to incorporate third–party revenue, becoming slightly stronger when $\partial_2 f(x)<0$ and weaker otherwise (see \Cref{dis:regularity} for illustration).

We now analyze three canonical mechanisms that arise naturally in subscription markets:
(i) selling the good alone,
(ii) selling only a bundled product that includes the bad, and
(iii) offering both tiers simultaneously, allowing buyers to pay to avoid the bad.


\subsection{Good-Only Posted Price}
Services such as Apple Music or Netflix (in markets without an ad-supported tier) are offered exclusively in an ad-free form at a posted price. 
The menu in this case consists of two outcomes: one that offers the good alone at a posted price, and the default outcome in which the buyer pays nothing and receives nothing.

\begin{defn}[Good-Only Posted Price]\label{def:good-only}
A mechanism $(q,t)$ is called a \emph{Good-Only Posted Price} mechanism if there exists a price $p_{g}^* \in [\underline{x}_1,\overline{x}_1]$ such that
\begin{equation*}
\big(q(x),t(x)\big) =
\begin{cases}
\big((0,0),\,0\big) & \text{if } x_1 \leq p_{g}^*,\\
\big((1,0),\,p_g^*\big) & \text{if } x_1 > p_{g}^*.
\end{cases}
\end{equation*}
\end{defn}

A mechanism like Good-Only Posted Price that screens only one of the dimensions is never optimal in a multi-good environment.
In the two-good case, its menu is $\{\big((0,0),0\big),\big((1,0),p_{g}^*\big)\}$.
Augmenting the menu with the outcome $\big((0,1),p_{g}^*\big)$ strictly increases revenue: types $\{x \in X : x_{1} \le p_{g}^*,\, x_{2} > p_{g}^*\}$—who previously chose the default $((0,0),0)$—now select the new option and pay $p_{g}^*$, while all other types’ choices remain unchanged.

By slight abuse of notation, for a Good-Only Posted Price mechanism with price $p_{g}^*$, let
\begin{equation*}
\mathcal{Z} := \{x \in X : x_{1} \leq p_{g}^*\} \quad \text{and} \quad \mathcal{W} := X \setminus \mathcal{Z}.
\end{equation*}
Types in $\mathcal{W}$ buy the good at price $p_g^*$, while those in $\mathcal{Z}$ choose the default option (\Cref{fig:good-only}).

\begin{figure}[!hbt]
\centering
\includegraphics[width=3in]{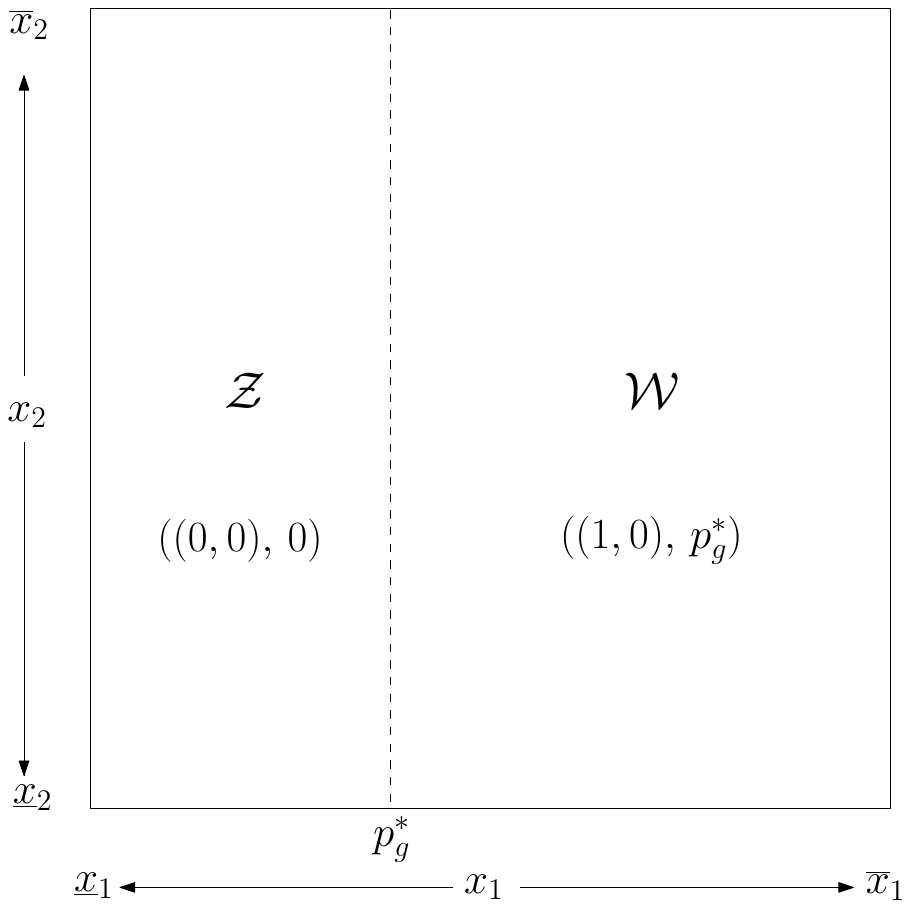}
\caption{Good-Only Posted Price}
\label{fig:good-only}
\end{figure}

In what follows, we first derive necessary conditions for the optimality of such mechanisms. 
For types in $\mathcal{Z}$, these conditions are also sufficient, whereas sufficiency for $\mathcal{W}$ further requires an orthant condition on the measure $\mu$.

\begin{prop}\label{prop:good-only}
If a \emph{Good-Only Posted Price} mechanism is optimal, then the following conditions must hold:
\begin{enumerate}[label=\textup{(\roman*)}]
    \item $\mu(\mathcal{Z}) = \mu(\mathcal{W}) = 0$;
    \item $k \leq |\overline{x}_2|$;
    \item For all $t \in [\underline{x}_1, p_{g}^*]$,
    \begin{equation*}
        \int_{t}^{p_{g}^*} M(x_1)\,dx_1 \;\geq\; 0,
        \qquad 
        \text{where } M(x_1) := \mu\big([\underline{x}_1, x_1] \times [\underline{x}_2, \overline{x}_2]\big).
    \end{equation*}
\end{enumerate}
\end{prop}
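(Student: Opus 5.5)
The plan is to derive all three conditions from the finite-menu characterization in Lemma~\ref{lem:finitechar}. The Good-Only Posted Price menu has two outcomes.
\begin{itemize}
\item On $\mathcal{Z}$ the outcome is $((0,0),0)$, so $\vec v=(1,1)$. The admissible test functions are convex and nondecreasing in both coordinates, and optimality requires $\int u\,d\mu|_{\mathcal{Z}}\le 0$ for all of them.
\item On $\mathcal{W}$ the outcome is $((1,0),p_g^*)$, so $\vec v=(-1,1)$. The test functions are convex, nonincreasing in $x_1$ and nondecreasing in $x_2$, and optimality requires $\int u\,d\mu|_{\mathcal{W}}\le0$.
\end{itemize}
Each condition will come from plugging a well-chosen test function into one of these two inequalities.

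For (i), constant functions are convex and $\vec v$-monotone for every $\vec v$. Taking $u\equiv 1$ and $u\equiv -1$ on $\mathcal{Z}$ gives $\mu(\mathcal{Z})\le 0$ and $-\mu(\mathcal{Z})\le0$, so $\mu(\mathcal{Z})=0$. The same argument on $\mathcal{W}$ gives $\mu(\mathcal{W})=0$; alternatively this follows from $\mu(X)=0$.

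For (iii), fix $t\in[\underline x_1,p_g^*]$ and use $u(x)=(x_1-t)^+$ on $\mathcal{Z}$. This function is convex, nondecreasing in $x_1$ and constant in $x_2$, hence admissible. On $\mathcal{Z}$ I will use the layer-cake identity
\[
(x_1-t)^+=\int_t^{p_g^*}\mathbb{I}\{x_1>s\}\,ds .
\]
By Fubini, justified since $\mu$ is a finite signed Radon measure,
\[
\int_{\mathcal{Z}}(x_1-t)^+\,d\mu=\int_t^{p_g^*}\mu\big(\mathcal{Z}\cap\{x_1>s\}\big)\,ds=\int_t^{p_g^*}\big(\mu(\mathcal{Z})-M(s)\big)\,ds .
\]
The point mass at $\underline x$ and the left-edge boundary mass both sit inside $[\underline x_1,s]\times[\underline x_2,\overline x_2]$, so they are correctly absorbed into $M(s)$. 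Using $\mu(\mathcal{Z})=0$ from (i), the optimality inequality becomes $-\int_t^{p_g^*}M(s)\,ds\le 0$, which is (iii).

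For (ii), I would probe the top edge $x_2=\overline x_2$ with the functions
\[
u_\varepsilon(x)=\tfrac1\varepsilon\,\big(x_2-\overline x_2+\varepsilon\big)^+ .
\]
These are convex, nondecreasing in $x_2$, constant in $x_1$, bounded by $1$, and supported in a band of width $\varepsilon$ below the top edge. They are therefore admissible on both $\mathcal{Z}$ and $\mathcal{W}$; I would use whichever region has a nondegenerate top-edge segment, say $\mathcal{Z}$ when $p_g^*>\underline x_1$.

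As $\varepsilon\to0$ the contributions to $\int u_\varepsilon\,d\mu|_{\mathcal{Z}}$ behave as follows:
\begin{itemize}
\item the interior part of $\mu$ has bounded density, so it contributes $O(\varepsilon)$;
\item the side-edge boundary terms are integrated over a segment of length $\varepsilon$ and vanish;
\item the atom at $\underline x$ is not in the band;
\item the top-edge term has outward normal $(0,1)$ and density $f(x)\,(\overline x_2+k)$, and survives.
\end{itemize}
The limit is therefore $(\overline x_2+k)\int_{\underline x_1}^{p_g^*}f(x_1,\overline x_2)\,dx_1\le 0$. If $f$ is positive somewhere on that segment, this gives $k\le-\overline x_2=|\overline x_2|$.

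The main obstacle is this limiting argument and its nondegeneracy. I need to control the boundary contributions carefully, and I need positive density on the top edge within $\mathcal{Z}$ or within $\mathcal{W}$. If $f$ vanishes on the whole top edge, the condition is vacuous and a perturbed argument, or an implicit positivity assumption on $f$, is required. I would state that assumption explicitly and otherwise treat the case $p_g^*=\underline x_1$ by running the same computation on $\mathcal{W}$.
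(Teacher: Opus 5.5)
Your proposal is correct and follows the paper's proof essentially step for step. You use constants for (i); for (ii) a test function concentrating on the top edge, where the paper uses $\exp((x_2-\overline{x}_2)/\delta)$ instead of your ramp but gets the same limit $\mu(T\cap\mathcal{Z})$; and for (iii) the hinge $(x_1-t)_+$ with Fubini and $\mu(\mathcal{Z})=0$. Your explicit handling of the degenerate cases is a point the paper passes over when it asserts $\mu(T\cap\mathcal{Z})>0$: you treat $p_g^*=\underline{x}_1$ by switching to $\mathcal{W}$, and you state the need for $f>0$ on the top edge.
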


Condition (ii) requires that the third--party payment $k$ be no greater than the minimum disutility from the bad, $|\overline{x}_2|$.
Condition (iii) requires that tail integrals of the one-dimensional marginal cumulative measure $M(x_1)$ be nonnegative on $\mathcal{Z}$.

\noindent \emph{Proof Sketch.} 
Condition (i) follows directly from \Cref{lem:finitechar}, since convex dominance against constants forces each menu region to have zero total mass under $\mu$. 
For (ii), suppose $k > |\overline{x}_2|$. 
Then the top edge of $X$ carries strictly positive mass under $\mu$, contradicting the dominance condition for $\mathcal{Z}$ in \Cref{lem:finitechar} when tested against convex functions that increase steeply in $x_2$ near the top edge.
Finally, for (iii) we apply \Cref{lem:finitechar} with hinge test functions $u_t(x) = (x_1 - t)_+$ supported on $\mathcal{Z}$, which reduce the condition to the stated integral inequality. 
Details are provided in Appendix~\ref{appendix:good-only}.

The conditions in Proposition \ref{prop:good-only} are also sufficient for optimality in $\mathcal{Z}$. 
Adding one further condition extends sufficiency to $\mathcal{W}$.

\begin{theorem}\label{theo:good-only}
Suppose $f$ satisfies MM, and the \emph{Good-Only Posted Price} mechanism with price $p_g^*$ satisfies Conditions \textup{(i)}–\textup{(iii)} of Proposition~\ref{prop:good-only}. 
If, in addition,
\begin{enumerate}[label=(\roman*),resume]
\item[\textup{(iv)}] $\mu \big([x_1,\overline{x}_1]\times[\underline{x}_2,x_2]\big) \geq 0
\quad\text{for all }x=(x_1,x_2)\in\mathcal W$,
\end{enumerate}
then the mechanism is optimal.
\end{theorem}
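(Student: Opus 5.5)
The plan is to verify the two dominance conditions of \Cref{lem:finitechar}, one for each menu item. For $(a,p)=((0,0),0)$ on $\mathcal Z$, $\vec v(a)=(1,1)$: we need $\int u\,d\mu|_{\mathcal Z}\le 0$ for every convex $u$ that is nondecreasing in both coordinates. For $(a,p)=((1,0),p_g^*)$ on $\mathcal W$, $\vec v(a)=(-1,1)$: we need $\int u\,d\mu|_{\mathcal W}\le0$ for every convex $u$ that is nonincreasing in $x_1$ and nondecreasing in $x_2$.

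First I record the sign pattern of $\mu$ from \eqref{eq:transk}.
\begin{itemize}[itemsep=-2pt, topsep=1pt, partopsep=0pt]
\item By MM, the interior density is $\le 0$.
\item The left edge $x_1=\underline x_1$ has weight $-x_1f\le0$.
\item The right edge $x_1=\overline x_1$ has weight $\overline x_1 f\ge0$.
\item The bottom edge $x_2=\underline x_2$ has weight $|\underline x_2|-k\ge0$, because (ii) gives $k\le|\overline x_2|\le|\underline x_2|$.
\item The top edge has weight $\overline x_2+k\le0$, again by (ii).
\item There is a point mass $+1$ at $\underline x$.
\item The internal line $x_1=p_g^*$ carries no mass.
\end{itemize}
So on $\mathcal Z$ the positive mass sits only on the bottom edge and at $\underline x$. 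On $\mathcal W$ it sits only on the bottom edge and the right edge, that is, on the ``lowest'' boundary in the partial order relevant to the test functions.

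\emph{Region $\mathcal Z$.} Given a test function $u$, set $\tilde u(x_1):=u(x_1,\underline x_2)$. Since $u$ is nondecreasing in $x_2$, $u\ge\tilde u$ everywhere, with equality on the bottom edge (including $\underline x$). The remaining mass of $\mu|_{\mathcal Z}$ is nonpositive. Hence $\int u\,d\mu|_{\mathcal Z}\le\int\tilde u(x_1)\,d\mu|_{\mathcal Z}=\int_{\underline x_1}^{p_g^*}\tilde u\,dM$.

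Integrating by parts and using $M(p_g^*)=\mu(\mathcal Z)=0$ from (i) gives $-\int M\tilde u'\,dx_1$. Now write the nondecreasing, nonnegative $\tilde u'$ as $\tilde u'(\underline x_1)+\int \mathbb I[t\le x_1]\,d\tilde u'(t)$. The expression becomes
\[
-\tilde u'(\underline x_1)\int_{\underline x_1}^{p_g^*}M\,dx_1-\int\Big(\int_t^{p_g^*}M\,dx_1\Big)d\tilde u'(t),
\]
which is $\le0$ by (iii). Only MM, (i), (ii) and (iii) are used here.

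\emph{Region $\mathcal W$.} Reflect coordinates via $y_1=-x_1$, so the test functions become nondecreasing in both coordinates of $y$. Condition (iv) then says $\mu$ gives nonnegative mass to every lower-left orthant anchored at the corner $(\overline x_1,\underline x_2)$. My first idea is to integrate by parts twice against the orthant function $F(x):=\mu([x_1,\overline x_1]\times[\underline x_2,x_2])$, using $F\ge0$ and $\mu(\mathcal W)=0$. The resulting edge terms have the right sign: they pair $F$ with $-\partial_1u\ge0$ or $\partial_2u\ge0$ with an overall minus sign, giving $\le0$. However, the interior term $\int \partial_{12}u\,F$ has no sign, since convexity does not control the cross partial. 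This is the step I expect to be the main obstacle.

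To get around it, I would avoid differentiating $u$ and argue by transport instead. The positive part $\mu^+|_{\mathcal W}$ lives on the L-shaped curve $L$ formed by the bottom edge and the right edge of $\mathcal W$. Parametrize $L$ monotonically, running from $(p_g^*,\underline x_2)$ along the bottom to the corner and then up the right edge. I would then build a coupling $\gamma$ between $\mu^+|_{\mathcal W}$ and $\mu^-|_{\mathcal W}$ that moves each source point $x$ to a target $y$ with $y_1\le x_1$ and $y_2\ge x_2$. The construction is northwest-corner-style: process $L$ in order and greedily assign to each source point negative mass from its upper-left orthant.

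With such a $\gamma$, $\int u\,d\mu|_{\mathcal W}=\int(u(x)-u(y))\,d\gamma\le0$ for every test function, and convexity is not even needed. Feasibility of the greedy assignment is a Hall/Strassen-type condition over upward-closed sets generated by subsets of $L$. Because $L$ is a monotone curve of this special shape, I would try to show that the binding sets are exactly the complements of the orthants $[x_1,\overline x_1]\times[\underline x_2,x_2]$, so that (iv) together with $\mu(\mathcal W)=0$ suffices. Checking this reduction carefully is where I expect the real work to lie. Once both regions are handled, \Cref{lem:finitechar} gives optimality.
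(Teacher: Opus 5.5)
Your proposal is correct. On $\mathcal Z$ it is the paper's argument. You reduce to the bottom edge using monotonicity in $x_2$ and the sign pattern of $\mu$, and you correctly invoke (ii) for the top edge, which the paper's text leaves implicit. Your integration by parts against $dM$, with $\tilde u'(x_1)=\tilde u'(\underline x_1)+\int\mathbb{I}[t\le x_1]\,d\tilde u'(t)$, is a hand proof of the hinge-generator result (Theorem 1.5.7 in \citet{muller2002comparison}) that the paper cites.

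On $\mathcal W$ your route is genuinely different. The paper builds no coupling. It shows directly that every lower set $\mathcal L\subseteq\mathcal W$ (the sublevel sets of the test functions) has $\mu(\mathcal L)\ge 0$. It does this by enclosing $\mathcal L$ in the orthant $O_{\mathcal L}=[x_{1\mathcal L},\overline x_1]\times[\underline x_2,x_{2\mathcal L}]$ and noting that $O_{\mathcal L}\setminus\mathcal L$ avoids the bottom and right edges, so it carries only nonpositive mass. A layer-cake decomposition of $u$ into indicators of its sublevel sets then gives $\int u\,d\mu|_{\mathcal W}\le 0$.

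The reduction you flag as ``the real work'' is exactly this step in complemented form, and it is short:
\begin{itemize}
\item Take an upper set $U$ in your order, and let $U'\subseteq U$ be the upper set generated by $U\cap L$.
\item A bottom-edge point $(a,\underline x_2)$ generates the strip $\{x_1\le a\}$, and a right-edge point $(\overline x_1,b)$ generates $\{x_2\ge b\}$. So $U'$ is a union of a vertical and a horizontal strip, and $\mathcal W\setminus U'$ is a lower-right orthant anchored at $(\overline x_1,\underline x_2)$.
\item Since $U\setminus U'$ does not meet $L$, MM and (ii) give $\mu(U\setminus U')\le 0$.
\item Hence $\mu(U)\le\mu(U')=-\mu(\mathcal W\setminus U')\le 0$, by (i) and (iv).
\end{itemize}

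Once $\mu(U)\le 0$ holds for every upper set, the layer-cake identity already gives $\int u\,d\mu|_{\mathcal W}\le 0$ for every bounded monotone $u$, convex or not, as you observed. So passing through Strassen's theorem to obtain a coupling is more machinery than the statement needs. What it buys you is an explicit transport certificate, closer in spirit to DDT's dual.

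If you want that coupling, take its existence from Strassen's theorem rather than from the greedy you sketch. Hall and Strassen guarantee that some monotone coupling exists, not that an arbitrary greedy order succeeds. For example, a bottom-edge source that takes negative mass high up in its strip $\{x_1\le a\}$ can starve a right-edge source near the top, which can only reach $\{x_2\ge b\}$, even when (iv) holds. A greedy construction would need a specific tie-breaking rule and its own correctness proof.
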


\noindent \emph{Proof Sketch.}
The argument separates the two regions of the type space.
For $\mathcal{Z}$, by \Cref{lem:finitechar}, the test functions are convex and nondecreasing in both coordinates.
Under MM, Condition (i) of Proposition~\ref{prop:good-only} and the fact that the left edge carries negative $\mu$–mass, implies that positive mass in $\mathcal{Z}$ is supported only on the bottom edge.
It therefore suffices to test one-dimensional convex, nondecreasing functions supported on the bottom edge.
These are represented by hinge functions (Theorem 1.5.7 in \cite{muller2002comparison}), precisely those used to derive Condition (iii) in Proposition~\ref{prop:good-only}.

\noindent For $\mathcal{W}$, the test functions are convex, nonincreasing in $x_1$ and nondecreasing in $x_2$. 
Their sublevel sets are lower sets, which can be enclosed in lower–right orthants. 
Condition (i) and MM imply that enlarging to the orthant does not add positive mass, while Condition (iv) ensures every such orthant has nonnegative mass. 
Using a layer-cake approximation, we show that all admissible test functions integrate to nonpositive values with respect to $\mu|_{\mathcal{W}}$. Details are provided in Appendix~\ref{appendix:good-only2}.

Together, Proposition \ref{prop:good-only} and Theorem \ref{theo:good-only} provide a near-characterization of the Good-Only Posted Price mechanism: Conditions (i)–(iii) are necessary, and adding Condition (iv) makes them sufficient.

\begin{remark}\label{rem:convexity}
\upshape
In region $\mathcal{Z}$, the structure of $\mu$ restricts positive mass to the bottom edge, effectively reducing the analysis to one dimension.
In this setting, every convex nondecreasing function can be expressed as a countable superposition of hinge functions $(x-t)_+$, each corresponding to a supporting hyperplane where the slope changes.
Hence verifying the dominance inequality only for these one-dimensional generators (as in Condition (iii)) is both necessary and sufficient.
In higher dimensions, however, convexity can occur along uncountably many directions, and the cone of convex functions admits no finite or countable generating family.
As noted by \citet{muller2002comparison}, “in the multivariate case there is no easy characterization of these orders as in Theorem 1.5.7.… Therefore there is no hope of finding a ‘small’ generator.”
Thus, the reduction to hinge functions that works in $\mathcal{Z}$ is no longer possible once convexity becomes genuinely two-dimensional.
\end{remark}

\subsubsection{The case without third--party payments}

We next consider the benchmark case without third–party payments ($k=0$), where allocating the bad yields no external compensation.
The following two examples illustrate this case: the first shows that bundling can raise revenue even when $k=0$, and the second verifies the optimality of the \emph{Good–Only Posted Price} mechanism under the conditions of \Cref{theo:good-only}.

\begin{example}\label{ex:fullsurplus}
\upshape
Let $X = [0,1]\times[-1,0]$ and let $f$ place equal probability $0.5$ on the two points $x=(0.5,0)$ and $x'=(1,-1)$.\footnote{Although discrete, this distribution can be approximated by a limit of truncated normal densities centered at $x$ and $x’$.}
Consider the mechanism $(q,t)$ defined by $(q(x),t(x))=((1,1),0.5)$ and $(q(x'),t(x'))=((1,0),1)$. 
This mechanism is incentive compatible and yields expected revenue $0.75$, while the optimal Good-Only Posted Price (either $1$ or $0.5$) yields only $0.5$ revenue. 
Thus bundling the bad with the good allows the seller to extract strictly higher revenue.
\end{example}

\begin{example}
\upshape
Let $X=[0,1]\times[-1,0]$, $k=0$, and let $f$ be uniform. 
Then for all measurable $A \subseteq X$, the transformed measure is
\begin{equation*}
\mu(A) = \mathbb{I}_{A}((0,-1)) + \int_{\partial X} \mathbb{I}_{A}(x)\,(x\cdot\hat{n})\,dx - 3\!\int_X \mathbb{I}_{A}(x)\,dx.
\end{equation*}
Hence $\mu$ assigns a point mass of $+1$ at $(0,-1)$, total mass $-3$ uniformly over the interior, and linear density $+1$ along the edges $x_2=-1$ and $x_1=1$.
Consider the Good-Only Posted Price mechanism with price $p_{g}^* = 0.5$.

\noindent It is immediate that Condition \textup{(i)--(ii)} of Proposition~\ref{prop:good-only} are satisfied. 
To verify Condition \textup{(iii)} of Proposition \ref{prop:good-only} on $\mathcal Z$, note that $M(x_1)=1+x_1-3x_1=1-2x_1$. 
Hence $\int_t^{0.5}(1-2x_1)\,dx_1=0.25-(t-t^2)\ge0$ for all $t\in[0,0.5]$. 
For Condition \textup{(iv)} of \Cref{theo:good-only}, we compute
\begin{equation*}
\mu([x_1,1]\times[-1,x_2])=(1-x_1)+(x_2+1)-3(1-x_1)(x_2+1)
=2x_1-2x_2+3x_1x_2-1 \ge 0
\end{equation*}
for all $x\in\mathcal W$. 
Thus the \emph{Good--Only Posted Price} mechanism is optimal in this example.
\end{example}


\subsection{Single-Bundle Posted Price}
Legacy print and broadcast media services are often bundled with advertisements and offered at a posted price.
The menu features two outcomes: one offering the bundle of the good and the bad at a fixed price, and a default option where the buyer pays and receives nothing.
Buyers do not have option to reduce advertisement exposure by paying more.

\begin{defn}[Single-Bundle Posted Price]\label{def:single-bundle}
A mechanism $(q,t)$ is called a \emph{Single-Bundle Posted Price} mechanism if there exists a price $p_{sb}^* \in \mathbb{R}_+$ such that
\begin{equation*}
\big(q(x),t(x)\big) =
\begin{cases}
\big((0,0),\,0\big) & \text{if } x_1 + x_2 \leq p_{sb}^*,\\[4 pt]
\big((1,1),\,p_{sb}^*\big) & \text{otherwise}.
\end{cases}
\end{equation*}
\end{defn}

Given a \emph{Single-Bundle Posted Price} mechanism $(q,t)$ with price $p_{sb}^*$, define
\begin{align*}
\mathcal{Z} &:= \{\,x \in X : x_1 + x_2 \leq p_{sb}^*\,\},\\
\mathcal{Y} &:= X \setminus \mathcal{Z}.
\end{align*}
Types in $\mathcal{Y}$ purchase the bundle, while those in $\mathcal{Z}$ choose the outside option (see \Cref{fig:sb}).

\begin{figure}[!hbt]
\centering
\includegraphics[width=3in]{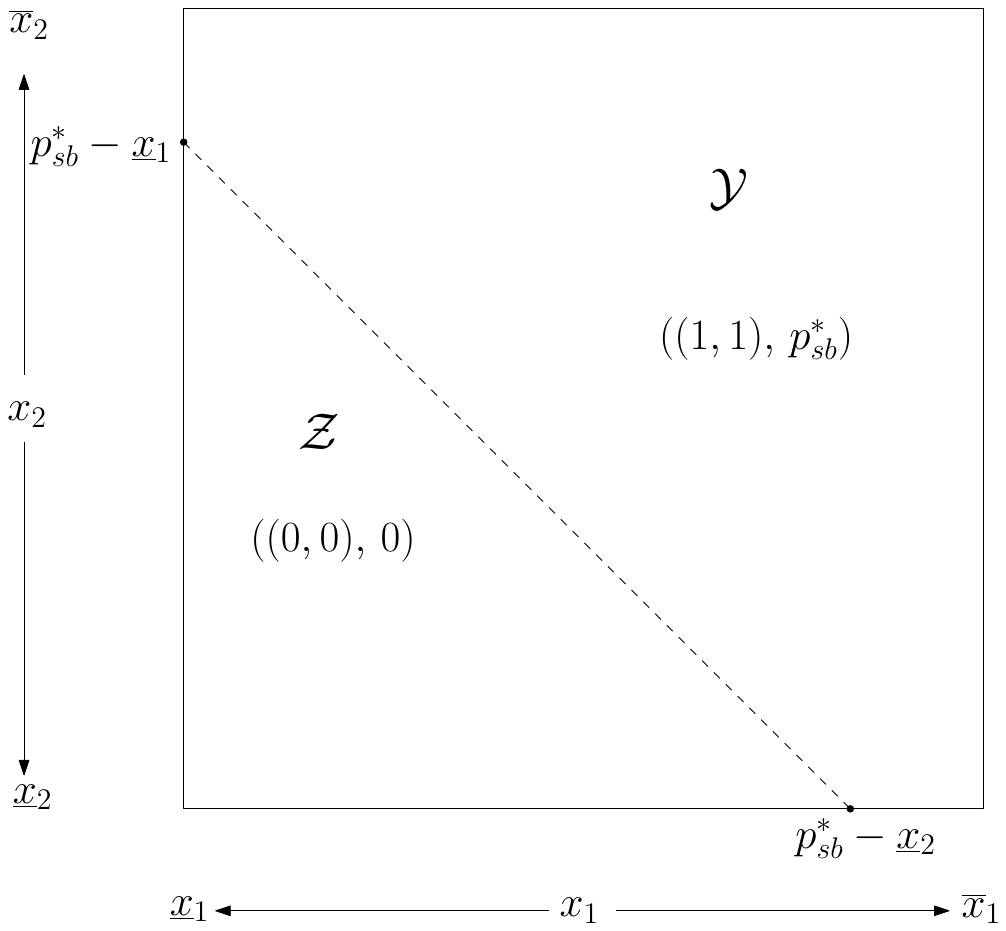}
\caption{Single-Bundle Posted Price}
\label{fig:sb}
\end{figure}

\begin{prop}\label{prop:sbnec}
If a \emph{Single-Bundle Posted Price} mechanism $(q,t)$ with price $p_{sb}^*$ is optimal, then the following conditions must hold:
\begin{enumerate}[label=\textup{(\roman*)}]
\item $\mu(\mathcal{Z}) = \mu(\mathcal{Y}) = 0$;
\item $k \geq |\underline{x}_2|,\;$ and $\;p_{sb}^* \leq \min\{\overline{x}_1 + \underline{x}_2,\;\underline{x}_1 + \overline{x}_2\}$.
\end{enumerate}
\end{prop}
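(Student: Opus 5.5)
The plan is to use \Cref{lem:finitechar} on the two menu items, with test functions concentrated near single edges of $X$. Each edge carries boundary density $f(x)\,(x+v_k)\cdot\hat n$. On the right edge ($\hat n=(1,0)$) this is $f\,\overline{x}_1>0$. On the bottom edge ($\hat n=(0,-1)$) it is $f\,(|\underline{x}_2|-k)$. On the top edge ($\hat n=(0,1)$) it is $f\,(k-|\overline{x}_2|)$. I will assume $f>0$ on the relevant edges; otherwise the arguments go through on the support. The region $\mathcal Z$ receives $a=(0,0)$, so its test functions are convex and nondecreasing in both coordinates. The region $\mathcal Y$ receives $a=(1,1)$, so its test functions are convex and nonincreasing in both coordinates.

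For (i), note that any constant function is convex and both nondecreasing and nonincreasing, so it is admissible for both regions. Testing $u\equiv 1$ and $u\equiv -1$ in \Cref{lem:finitechar} gives $\mu(\mathcal Z)\le 0$ and $-\mu(\mathcal Z)\le 0$, and likewise for $\mathcal Y$. This is the same argument as in Proposition~\ref{prop:good-only}(i).

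For (ii), I proceed in three steps, and the order matters.

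Step 1: show $p_{sb}^*\le \overline{x}_1+\underline{x}_2$. Suppose instead that $p_{sb}^*>\overline{x}_1+\underline{x}_2$. Then the segment of the right edge $\{x_1=\overline{x}_1,\ x_2\in[\underline{x}_2,\,p_{sb}^*-\overline{x}_1)\}$ lies in $\mathcal Z$ and has positive length. Test with
$$u_\varepsilon(x)=\tfrac{1}{\varepsilon}\bigl(x_1-\overline{x}_1+\varepsilon\bigr)_+ ,$$
which is convex and nondecreasing. Its integral against $\mu|_{\mathcal Z}$ equals the right-edge mass in $\mathcal Z$ plus an error of order $O(\varepsilon)$. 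The error is small because the interior density $\nabla f\cdot(x+v_k)+3f$ is bounded and the strip has width $\varepsilon$, and because only $O(\varepsilon)$-length pieces of the top and bottom edges enter. For the point mass at $\underline x$, $u_\varepsilon$ vanishes there once $\varepsilon<\overline{x}_1-\underline{x}_1$. The limit is strictly positive, which contradicts \Cref{lem:finitechar}.

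Step 2: show $k\ge|\underline{x}_2|$. By Step 1, the bottom edge near the corner $(\overline{x}_1,\underline{x}_2)$ lies in $\mathcal Y$. Suppose $k<|\underline{x}_2|$, so the bottom edge carries positive density. Test with
$$u_\varepsilon(x)=\tfrac{1}{\varepsilon}\bigl(\underline{x}_2+\varepsilon-x_2\bigr)_+ ,$$
which is convex and nonincreasing, against $\mu|_{\mathcal Y}$. The integral tends to the bottom-edge mass inside $\mathcal Y$, which is positive, again a contradiction. The point mass at $\underline x$ does not interfere. If $\underline x\in\mathcal Y$, its $+1$ only strengthens the contradiction. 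Otherwise it is not in $\mathcal Y$ at all.

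Step 3: show $p_{sb}^*\le \underline{x}_1+\overline{x}_2$. Since $k\ge|\underline{x}_2|\ge|\overline{x}_2|$, the top-edge density is $f\,(k-|\overline{x}_2|)\ge 0$. Suppose $p_{sb}^*>\underline{x}_1+\overline{x}_2$. Then a top-edge segment of positive length lies in $\mathcal Z$, and testing with $\tfrac1\varepsilon(x_2-\overline{x}_2+\varepsilon)_+$ gives a contradiction as in Step 1.

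I expect the main obstacle to be the degenerate cases. The first is equality in the bounds, where the offending edge piece shrinks to a corner. The second is $k=|\overline{x}_2|=|\underline{x}_2|$, which only happens in the degenerate case $\underline{x}_2=\overline{x}_2$; there the top-edge density in Step 3 vanishes. For the equality cases I would argue by strictness of the hypothesis, since the contradiction only needs a positive-length segment. For the vanishing-density case I would either fall back on the strict inequality $\underline{x}_2<\overline{x}_2$, or use the right-edge piece near $(\overline{x}_1,\overline{x}_2)$ together with (i). The second thing I would check carefully is that the $O(\varepsilon)$ error terms are indeed negligible; this uses only the boundedness of $f$ and $\nabla f$.
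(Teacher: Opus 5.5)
Your treatment of (i), Step 1 and Step 3 matches the paper's argument; using hinges instead of the paper's exponentials changes nothing. Step 2, however, has a hole at the boundary value $p_{sb}^*=\overline{x}_1+\underline{x}_2$, which Step 1 does not exclude.

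At that value your claim ``by Step 1, the bottom edge near $(\overline{x}_1,\underline{x}_2)$ lies in $\mathcal Y$'' is false. A point $(x_1,\underline{x}_2)$ is in $\mathcal Y$ iff $x_1>p_{sb}^*-\underline{x}_2=\overline{x}_1$, so the whole bottom edge, corner included, lies in $\mathcal Z$. Your $\mathcal Y$-test then picks up no bottom-edge mass, its integral tends to $0$, and nothing is contradicted. The remedy you sketch for equality cases does not help. The strict hypothesis $k<|\underline{x}_2|$ gives positive \emph{density} on the bottom edge, not a positive-length piece of it inside $\mathcal Y$. So the configuration $k<|\underline{x}_2|$, $p_{sb}^*=\overline{x}_1+\underline{x}_2$ survives your argument. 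Step 3 uses $k\ge|\underline{x}_2|$ to sign the top edge, so it is left open in this configuration too.

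The missing idea is the paper's Case 1c. Switch to $\mathcal Z$, where the positive bottom edge now sits, and compare orders of magnitude at a fixed small $\varepsilon$ instead of passing to a limit. Near the corner, $\mathcal Z\cap\{x_1\ge\overline{x}_1-\varepsilon\}$ is a triangle of area $\varepsilon^2/2$. It contains the bottom segment $[\overline{x}_1-\varepsilon,\overline{x}_1]\times\{\underline{x}_2\}$, whose density is at least some $m_0>0$ by continuity (assuming $f(\overline{x}_1,\underline{x}_2)>0$, as you already do). It meets the right edge only at the corner, which carries no line mass.

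Take the convex nondecreasing hinge $u=\tfrac1\varepsilon(x_1-\overline{x}_1+\varepsilon)_+$. The bottom edge contributes at least $m_0\varepsilon/2$. The interior contributes at most $C\varepsilon^2/2$ in absolute value, where $C$ bounds $|\nabla f\cdot(x+v_k)+3f|$. Once $\varepsilon<\min\{\overline{x}_1-\underline{x}_1,\ \overline{x}_2-\underline{x}_2\}$, the point mass at $\underline x$, the left edge, and any part of the top edge inside $\mathcal Z$ all lie where $u=0$. Hence $\int u\,d\mu|_{\mathcal Z}\ge m_0\varepsilon/2-C\varepsilon^2/2>0$ for small $\varepsilon$, contradicting \Cref{lem:finitechar}. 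With this case added, Step 2 is complete and the rest of your proof goes through.
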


Condition (ii) establishes necessary bounds on the third-party payment and bundle price for the \emph{Single-Bundle Posted Price} mechanism to be optimal.

\noindent\emph{Proof Sketch.}
Condition (i) follows immediately from Lemma~\ref{lem:finitechar}, as noted earlier in Proposition \ref{prop:good-only}.
Condition (ii) follows by testing the admissible convex–monotone functions from Lemma~\ref{lem:finitechar} against the sign structure of $\mu$ along the boundary of $X$.
When $k < |\underline{x}_2|$, positive mass on both the right and bottom edges makes it impossible to satisfy the dominance condition.
When $k \ge |\underline{x}_2|$, positivity on the top and right edges restricts the cutoff $p_{sb}^*$ to lie below $\min\{\overline{x}_1+\underline{x}_2,\;\underline{x}_1+\overline{x}_2\}$.
Details are given in \Cref{appendix:sbnec}.

Next, we show that combining conditions from Proposition \ref{prop:sbnec} with an orthant condition on $\mu$ is sufficient for the optimality of the \emph{Single-Bundle Posted Price} mechanism.

\begin{theorem}\label{theo:sbsuff}
Suppose $f$ satisfies MM, and the \emph{Single-Bundle Posted Price Mechanism} with price $p_{sb}^*$ satisfies Conditions \textup{(i)--(ii)} of Proposition~\ref{prop:sbnec}. If, in addition, the following condition is satisfied:
\begin{enumerate}[label=(\roman*),resume]
\item[\textup{(iii)}] $\mu([x_1, \overline{x}_1] \times [x_2, \overline{x}_2] \cap \mathcal{Y} ) \geq 0
\quad\text{for all }x \in X$.
\end{enumerate}
then the mechanism is optimal.
\end{theorem}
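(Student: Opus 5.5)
We verify the two conditions of \Cref{lem:finitechar}, one for each menu item. For $\mathcal{Z}$, the outcome is $a=(0,0)$, so $\vec v(a)=(1,1)$ and the test functions are convex and nondecreasing in both coordinates. For $\mathcal{Y}$, the outcome is $a=(1,1)$, so $\vec v(a)=(-1,-1)$ and the test functions are convex and nonincreasing in both coordinates. The argument is parallel to the proof of \Cref{theo:good-only}: a sign analysis of $\mu$ on each region, followed by an orthant or layer-cake reduction.

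\emph{Region $\mathcal{Z}$.} First describe the sign structure of $\mu|_{\mathcal Z}$.
\begin{itemize}[itemsep=-2pt, topsep=1pt, partopsep=0pt]
\item Under MM the interior carries nonpositive mass.
\item The left edge $x_1=\underline x_1$ has outward normal $(-1,0)$, so its density is $-\underline x_1 f\le 0$.
\item The bottom edge has normal $(0,-1)$, so its density is $-(\underline x_2+k)f$. This is $\le 0$ because $k\ge|\underline x_2|$ by (ii).
\item The only positive contribution is the atom $+1$ at $\underline x$.
\end{itemize}
By the price bound $p_{sb}^*\le\min\{\overline x_1+\underline x_2,\underline x_1+\overline x_2\}$, the region $\mathcal Z$ is a lower-left triangle. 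It does not reach the right or top edge, except possibly at single points, which carry no mass. Hence $\mu|_{\mathcal Z}$ is the atom $+1$ at $\underline x$ plus a nonpositive measure. Condition (i) gives $\mu(\mathcal Z)=0$, so the nonpositive part has total mass $-1$. Take any $u$ that is nondecreasing in both coordinates. Then $u\ge u(\underline x)$ on $\mathcal Z$, and therefore
\[
\int u\,d\mu|_{\mathcal Z}\le u(\underline x)-u(\underline x)=0.
\]
Convexity is not needed here.

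\emph{Region $\mathcal{Y}$.} Let $u$ be convex and nonincreasing in both coordinates. Adding a constant does not change $\int u\,d\mu|_{\mathcal Y}$, because $\mu(\mathcal Y)=0$ by (i). So we may take $u\le 0$ with $\max_{\mathcal Y} u=0$.

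Apply the layer-cake formula:
\[
\int u\,d\mu|_{\mathcal Y}=-\int_0^\infty \mu\big(\{u<-s\}\cap\mathcal Y\big)\,ds.
\]
It therefore suffices to show that $\mu(S\cap\mathcal Y)\ge 0$ for every sublevel set $S=\{u<-s\}$.

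Because $u$ is nonincreasing in both coordinates, each such $S$ is an upper set. It is also convex, since $u$ is convex. Condition (iii) gives nonnegativity only for upper-right orthants $[x_1,\overline x_1]\times[x_2,\overline x_2]$. The task is to pass from a general convex upper set to orthants.

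\emph{Main obstacle: reducing upper sets to orthants.} I would try two routes.

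The first route is to enclose $S$ in the orthant generated by its lower-left "corner", which is the componentwise infimum of $S$. I would then argue that the added region $O\setminus S$ meets $\mathcal Y$ only in points of nonpositive mass. Those points are interior points or points on the left and bottom edges, whose densities were computed above as nonpositive. The added region cannot reach the positive top and right edges, since $S$ is an upper set and already contains the relevant parts of those edges. Then
\[
\mu(S\cap\mathcal Y)\ge\mu(O\cap\mathcal Y)\ge 0,
\]
with the last inequality from (iii). This is the same enlargement step as in \Cref{theo:good-only}.

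The delicate point is the claim that $O\setminus S$ avoids the top and right edges. Because $S$ is an upper set, it contains $\{x_2=\overline x_2,\,x_1\ge a\}$ for some threshold $a$. The orthant may add top-edge points with $x_1<a$, and those points carry positive mass $(\overline x_2+k)f$.

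The second route handles this if the first fails. Approximate $S$ by a finite union of orthants, using a staircase approximation. The mass of such a union can then be written by inclusion–exclusion, or decomposed into disjoint L-shaped strips. I would use convexity of $u$, together with the negative interior and left/bottom-edge densities, to show that the strips along the top and right edges contribute nonnegatively.

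I expect this step to require the most care.
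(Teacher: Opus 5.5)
\textbf{There is a gap at the decisive step.} Your treatment of $\mathcal Z$ and your layer-cake reduction on $\mathcal Y$ are correct and match the paper. The proof turns on showing $\mu(S)\ge 0$ for the upper sets $S=\{u<-s\}\cap\mathcal Y$, and the proposal never does this. You drop your first route because of an obstacle that does not exist, and the second route is only gestured at.

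\textbf{Why your first route works.} For an upper set, the componentwise infimum is already determined by the top and right edges. Since $\mathcal Y=\{x_1+x_2>p_{sb}^*\}$ is an upper set of $X$ and $u$ is coordinatewise nonincreasing, $S$ is an upper set of $X$. So for every $x\in S$, the points $(x_1,\overline{x}_2)$ and $(\overline{x}_1,x_2)$ also lie in $S$. Hence
$\inf_{x\in S}x_1=a:=\inf\{x_1:(x_1,\overline{x}_2)\in S\}$ and $\inf_{x\in S}x_2=b:=\inf\{x_2:(\overline{x}_1,x_2)\in S\}$.
The orthant $O=[a,\overline{x}_1]\times[b,\overline{x}_2]$ therefore meets the top and right edges only in segments already contained in $S$. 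The possible exceptions are the single endpoints $(a,\overline{x}_2)$ and $(\overline{x}_1,b)$, which carry no $\mu$-mass. In particular, $O$ contains no top-edge points with $x_1<a$, which was the worry you raised.

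Consequently $(O\cap\mathcal Y)\setminus S$ consists only of interior points and left- or bottom-edge points. There $\mu$ is nonpositive: by MM in the interior, $-\underline{x}_1 f\le 0$ on the left edge, and $-(\underline{x}_2+k)f\le 0$ on the bottom edge since $k\ge|\underline{x}_2|$. The atom at $\underline{x}$ lies in $\mathcal Z$, not $\mathcal Y$. So $\mu(S)\ge\mu(O\cap\mathcal Y)\ge 0$ by (iii). This is exactly the paper's construction of $O_U$ from $x_{1U}$ and $x_{2U}$.

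\textbf{On your fallback route.} As sketched, it would not close the gap on its own. Condition (iii) controls only individual orthants. Inclusion--exclusion for a union of orthants subtracts the masses of pairwise intersections, which are again upper-right orthants of nonnegative mass, so no sign follows. Convexity of $u$ is also not needed on $\mathcal Y$: the enlargement argument gives $\int u\,d\mu|_{\mathcal Y}\le 0$ for every bounded coordinatewise nonincreasing $u$. With the observation above inserted, your proof is complete and coincides with the paper's.
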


\noindent\emph{Proof Sketch.}
The sufficiency conditions from Lemma~\ref{lem:finitechar} are verified separately on $\mathcal{Z}$ and $\mathcal{Y}$.
In $\mathcal{Z}$, Condition~(ii) of Proposition~\ref{prop:sbnec} and MM ensure that $\mu$ places positive mass only at the lowest point $(\underline{x}_1,\underline{x}_2)$, so all convex nondecreasing tests integrate to nonpositive values.
In $\mathcal{Y}$, the argument parallels Theorem~\ref{theo:good-only}: the orthant condition (iii) guarantees nonnegative mass on all upper-right orthants, implying that $\int u\,d\mu|_{\mathcal{Y}}\le0$ for all nonincreasing convex tests.
Details are in \Cref{appendix:sbsuff}.

Proposition~\ref{prop:sbnec} and Theorem~\ref{theo:sbsuff} jointly yield a near-characterization of the Single-Bundle Posted Price mechanism, with Conditions (i)–(ii) necessary and Condition (iii) completing sufficiency.
The following example demonstrates such an optimal mechanism.

\begin{example}\label{ex:sb}
\upshape
Let $X = [1,2] \times [-1,0]$, $\kappa(x) = 1.5$, and let $f$ be the uniform distribution. 
The transformed measure $\mu$ is
\begin{equation*}
\mu(A) = \mathbb{I}_{A}((1,-1)) 
+ \int_{\partial X} \mathbb{I}_{A}(x)\,\big((x_1, x_2+1.5)\cdot \hat{n}\big)\,dx 
- 3\int_{X} \mathbb{I}_{A}(x)\,dx, \qquad A \subseteq X.
\end{equation*}

\noindent Thus $\mu$ has: a point mass of $+1$ at $(1,-1)$; uniform interior mass $-3$ over $X$; linear densities $-0.5$ on $x_2=-1$, $+2$ on $x_1=2$, $+1.5$ on $x_2=0$, and $-1$ on $x_1=1$.

\noindent Let $p_{sb}^*$ be the positive root of $3p^2+3p-2=0$, i.e.\ $p_{sb}^*\approx 0.46$. 
We obtain $p_{sb}^*$ by solving $\mu(\mathcal{Z})=0$. 
Therefore, Condition \textup{(i)} of Proposition \ref{prop:sbnec} is satisfied by construction, Condition \textup{(ii)} is straight-forward to verify. We check Condition \textup{(iii)} of \Cref{theo:sbsuff} in \Cref{excal:sb} and conclude that \emph{Single-Bundle Posted Price} mechanism with price $p_{sb}^* \approx 0.46$ is optimal.
\end{example}


\subsection{Ad-Tiered Posted Price}\label{sec:ad-tiered}

Building on the two benchmark mechanisms, we introduce a more general design that allows the buyer to pay to avoid the bad while still obtaining the good.
This structure, characteristic of digital markets with ad-supported and premium tiers, is referred to as the \emph{Ad-Tiered Posted Price} mechanism.

\begin{defn}[Ad-Tiered Posted Price]\label{def:ad-tiered}
A mechanism $(q,t)$ is called an \emph{Ad-Tiered Posted Price} mechanism if there exist prices 
$p_{g}^* \in [\underline{x}_1,\overline{x}_1]$ and $p_{sb}^* \in \mathbb{R}_+$ satisfying 
$p_{sb}^* \leq \min\{p_{g}^*,\underline{x}_1+\overline{x}_2\}$ such that
\begin{equation*}
(q(x),t(x)) =
\begin{cases}
\big((0,0),\,0\big) & \text{if } x_1 \leq p_{g}^* \text{ and } x_2 \leq p_{sb}^* - x_1,\\[4pt]
\big((1,0),\,p_{g}^*\big) & \text{if } x_1 > p_{g}^* \text{ and } x_2 \leq p_{sb}^* - p_{g}^*,\\[4pt]
\big((1,1),\,p_{sb}^*\big) & \text{otherwise}.
\end{cases}
\end{equation*}
\end{defn}

For an Ad-Tiered Posted Price mechanism, the type space is partitioned according to the three menu items each type may choose (also see \Cref{fig:ad-tier}):
\begin{align*}
\mathcal{Z} &:= \{x \in X : x_1 \le p_{g}^*,\, x_2 \le p_{sb}^* - x_1\}, 
&& \text{(default option)},\\
\mathcal{W} &:= \{x \in X : x_1 > p_{g}^*,\, x_2 \le p_{sb}^* - p_{g}^*\}, 
&& \text{(service with advertisements)},\\
\mathcal{Y} &:= X \setminus (\mathcal{Z} \cup \mathcal{W}),
&& \text{(ad-free service)}.
\end{align*}

\begin{figure}[!hbt]
\centering
\includegraphics[width=3.5in]{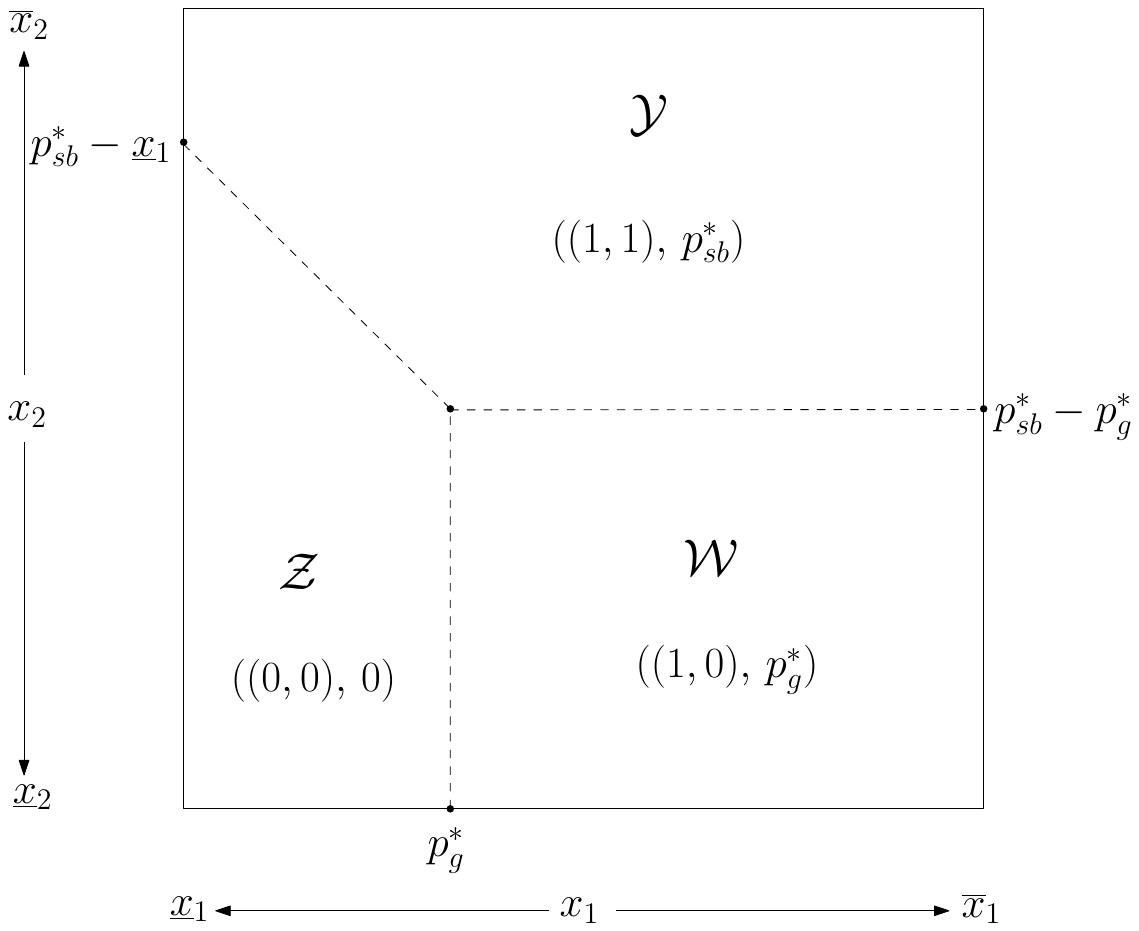}
\caption{Ad-Tiered Posted Price}
\label{fig:ad-tier}
\end{figure}

\begin{remark}\label{rem:pedagogy}
\upshape
The restriction $p_{sb}^* \le p_{g}^*$ follows from incentive compatibility, 
while the bound $p_{sb}^* \le \underline{x}_1 + \overline{x}_2$ defines the particular variant of the Ad-Tiered Posted Price mechanism analyzed here. 
The complementary case ($p_{sb}^* > \underline{x}_1 + \overline{x}_2$) would introduce an additional necessary condition $k \leq \overline{x}_2$ (as in Proposition~\ref{prop:good-only}).
The present formulation imposes no such restriction on $k$, and we adopt it for ease of exposition, as all other arguments and results remain essentially parallel.
\end{remark}

\begin{prop}\label{prop:ad-tier}
If an \emph{Ad-Tiered Posted Price} mechanism $(q,t)$ with prices $p_{g}^*$ and $p_{sb}^*$ is optimal, then the following conditions must hold:
\begin{enumerate}[label=\textup{(\roman*)}]
\item $\mu(\mathcal{Z}) = \mu(\mathcal{W}) = \mu(\mathcal{Y}) = 0$;
\item For all $t \in [\underline{x}_1, p_{g}^*]$,
    \begin{equation*}
        \int_{t}^{p_{g}^*} M(x_1)\,dx_1 \;\geq\; 0,
        \qquad 
        \text{where } M(x_1) := \mu\big(\big([\underline{x}_1, x_1] \times [\underline{x}_2, \overline{x}_2]\big) \cap \mathcal{Z} \big).
    \end{equation*}
\end{enumerate}
\end{prop}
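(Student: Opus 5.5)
The plan is to apply \Cref{lem:finitechar} region by region, choosing simple admissible test functions, exactly as in the proof of Proposition~\ref{prop:good-only}.

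\emph{Condition (i).} Each menu item $(a,p)$ admits the constant functions $u\equiv c$ as convex $\vec v(a)$-monotone tests, whatever the sign vector $\vec v(a)$. Testing against $u\equiv 1$ and $u\equiv -1$ on each region $R\in\{\mathcal Z,\mathcal W,\mathcal Y\}$ gives $\mu(R)\le 0$ and $-\mu(R)\le 0$. Hence $\mu(R)=0$ for all three regions. (Consistently with the paper's normalization, the three masses sum to $\mu(X)=0$.)

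\emph{Condition (ii).} The region $\mathcal Z$ receives $a=(0,0)$, so $\vec v(a)=(1,1)$ and the admissible tests are convex functions nondecreasing in both coordinates. For $t\in[\underline x_1,p_g^*]$, take the hinge $u_t(x)=(x_1-t)_+$. It is convex, nondecreasing in $x_1$ and constant in $x_2$, hence admissible, so \Cref{lem:finitechar} gives $\int u_t\,d\mu|_{\mathcal Z}\le 0$.

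It remains to rewrite this integral in terms of $M$. Every $x\in\mathcal Z$ has $x_1\le p_g^*$, so by the layer-cake / Fubini identity
\begin{equation*}
(x_1-t)_+=\int_t^{p_g^*}\mathbb{I}\{s<x_1\}\,ds .
\end{equation*}
Integrating against $\mu|_{\mathcal Z}$ and exchanging the order of integration yields
\begin{equation*}
\int u_t\,d\mu|_{\mathcal Z}=\int_t^{p_g^*}\mu\big(\mathcal Z\cap\{x_1>s\}\big)\,ds=\int_t^{p_g^*}\big(\mu(\mathcal Z)-M(s)\big)\,ds=-\int_t^{p_g^*}M(s)\,ds,
\end{equation*}
where the last step uses $\mu(\mathcal Z)=0$ from (i). Combined with $\int u_t\,d\mu|_{\mathcal Z}\le 0$, this is exactly Condition (ii).

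\emph{Main obstacle.} The delicate point is justifying Fubini with the signed Radon measure $\mu$, which has a point mass at $\underline x$ and line densities on the edges. The difference between the events $\{x_1>s\}$ and $\{x_1\ge s\}$ matters only at atoms of the $x_1$-marginal. That marginal has at most countably many atoms in $s$, so these form a Lebesgue-null set and do not affect the $ds$-integral. Boundedness of $f$, its derivatives and $k$ makes $|\mu|$ finite, which licenses the interchange.

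The regions $\mathcal W$ and $\mathcal Y$ contribute nothing beyond (i) in this proposition, so no further tests are needed there.
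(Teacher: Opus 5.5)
Your proposal is correct and follows the paper's route: constant tests $u\equiv\pm1$ give (i), and for (ii) you use the hinge tests $(x_1-t)_+$ on $\mathcal Z$, apply Fubini, and invoke $\mu(\mathcal Z)=0$, exactly as in the proof of Proposition~\ref{prop:good-only}(iii). Your worry about atoms is unnecessary: within $\mathcal Z$, the complement of $\{x_1>s\}$ is exactly $\{x_1\le s\}$, which matches the closed interval in the definition of $M$, so the identity $\mu(\mathcal Z\cap\{x_1>s\})=\mu(\mathcal Z)-M(s)$ holds for every $s$, not just almost every $s$.
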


Condition~(i) follows directly from \Cref{lem:finitechar}, while Condition~(ii) is derived through arguments analogous to those in Proposition~\ref{prop:good-only} for the region~$\mathcal{Z}$.

\begin{theorem}\label{theo:ad-tier}
Suppose $f$ satisfies MM, and the \emph{Ad-Tiered Posted Price} mechanism with prices $p_{g}^*$ and $p_{sb}^*$ satisfies Conditions~\textup{(i)--(ii)} of Proposition~\ref{prop:ad-tier}. 
If, in addition, the following conditions hold:
\begin{enumerate}[label=\textup{(\roman*)}]
\item[\textup{(iii)}] $\mu([x_1,\overline{x}_1] \times [\underline{x}_2,x_2] \cap \mathcal{W}) \geq 0
\quad\text{for all }x\in X$,
\item[\textup{(iv)}] $\mu([x_1,\overline{x}_1] \times [x_2,\overline{x}_2] \cap \mathcal{Y}) \geq 0
\quad\text{for all }x\in X$,
\end{enumerate}
then the \emph{Ad-Tiered Posted Price} mechanism $(q,t)$ with prices $p_{g}^*$ and $p_{sb}^*$ is optimal.
\end{theorem}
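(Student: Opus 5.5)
We verify the condition of \Cref{lem:finitechar} separately for each of the three menu items, i.e.\ on $\mathcal Z$, $\mathcal W$ and $\mathcal Y$, following the template of Theorems~\ref{theo:good-only} and~\ref{theo:sbsuff}. By Condition~(i) of Proposition~\ref{prop:ad-tier}, each region has zero $\mu$-mass. Hence adding a constant to a test function does not change $\int u\,d\mu|_R$, and we may normalize every test function so that $\min_R u = 0$. We use MM throughout: it makes the interior density of $\mu$ nonpositive. So the only possible positive mass comes from the boundary edges of $X$ and the point mass at $\underline x$. The relevant edge signs are:
\begin{itemize}[itemsep=-2pt, topsep=1pt]
\item right edge: density $\overline x_1 f \ge 0$;
\item left edge: density $-\underline x_1 f \le 0$;
\item bottom edge: density $-(\underline x_2+k) f$, whose sign depends on $k$;
\item top edge: density $(\overline x_2+k) f$.
\end{itemize}
The region boundaries $x_1=p_g^*$, $x_2=p_{sb}^*-x_1$ and $x_2=p_{sb}^*-p_g^*$ carry no boundary mass when they lie in the interior of $X$.

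\emph{Region $\mathcal Z$} (item $((0,0),0)$, tests convex and nondecreasing in both coordinates). Because $p_{sb}^*\le \underline x_1+\overline x_2$ and $p_{sb}^*\le p_g^*$, the set $\mathcal Z$ meets $\partial X$ only along the left edge, the bottom edge for $x_1\le p_g^*$, and the corner $\underline x$. The left edge has nonpositive mass and the interior is nonpositive by MM. Positive mass in $\mathcal Z$ therefore sits only on the bottom edge and at $\underline x$. As in Theorem~\ref{theo:good-only}, for a nondecreasing $u$ we have $u(x_1,x_2)\ge u(x_1,\underline x_2)$, so pushing all negative mass of each vertical fibre down to the bottom only increases $\int u\,d\mu|_{\mathcal Z}$. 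This reduces the problem to the one-dimensional measure $M$ of Proposition~\ref{prop:ad-tier}(ii). By Theorem 1.5.7 in \cite{muller2002comparison}, convex nondecreasing functions of $x_1$ are generated by the constants and the hinges $(x_1-t)_+$. Integrating a hinge by parts against $M$ gives
\[
\int (x_1-t)_+\,dM \;=\; -\int_t^{p_g^*}M(x_1)\,dx_1 \;\le\; 0,
\]
where the boundary term vanishes because $M(p_g^*)=\mu(\mathcal Z)=0$. Condition~(ii) supplies exactly this inequality.

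\emph{Region $\mathcal W$} (item $((1,0),p_g^*)$, tests convex, nonincreasing in $x_1$, nondecreasing in $x_2$). We use the layer-cake identity
\[
\int u\,d\mu|_{\mathcal W} \;=\; \int_0^\infty \mu(\{u>s\}\cap\mathcal W)\,ds \;=\; -\int_0^\infty \mu(S_s\cap\mathcal W)\,ds,
\]
where $S_s=\{u\le s\}$ and the second equality uses $\mu(\mathcal W)=0$. It therefore suffices to show $\mu(S\cap\mathcal W)\ge 0$ for every convex set $S$ that is closed under increasing $x_1$ and decreasing $x_2$. Let $a$ be the minimal $x_1$ on $S\cap\mathcal W$ and $b$ the maximal $x_2$ on it, and let $O=[a,\overline x_1]\times[\underline x_2,b]$. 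By the closure property of $S$, $S\cap\mathcal W$ already contains the right-edge segment up to height $b$ and the bottom-edge segment from $a$ onwards. Hence $(O\setminus S)\cap\mathcal W$ avoids the right and bottom edges, carries only nonpositive (interior) mass, and so $\mu(S\cap\mathcal W)\ge \mu(O\cap\mathcal W)\ge 0$ by Condition~(iii).

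\emph{Region $\mathcal Y$} (item $((1,1),p_{sb}^*)$, tests convex and nonincreasing in both coordinates). The argument is the same with upper-right orthants $[a,\overline x_1]\times[b,\overline x_2]$. The sublevel sets are upper sets, so $O\setminus S$ avoids the right and top edges. The remaining boundary it can touch is the left edge, which is nonpositive, and the interior boundaries with $\mathcal Z$ and $\mathcal W$, which carry no mass. Condition~(iv) then gives $\mu(S\cap\mathcal Y)\ge0$.

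The main obstacle I expect is the bookkeeping of boundary mass in the orthant-enclosure steps. For $\mathcal W$, the top of $\mathcal W$ can coincide with the top edge of $X$ when $p_{sb}^*-p_g^*=\overline x_2$, and that edge has density $(\overline x_2+k)f$ of either sign. I would handle this by noting that top-edge points lie in $O$ only when $b=\overline x_2$; in that case the closure property again forces the relevant top-edge segment into $S$. Finally, the layer-cake identity must be justified for merely continuous $u$, and I would do this by approximating with simple functions as in Theorem~\ref{theo:good-only}.
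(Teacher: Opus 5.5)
Your proposal is correct and follows essentially the same route as the paper's (only sketched) proof. You verify Lemma~\ref{lem:finitechar} region by region, reduce $\mathcal Z$ to the bottom edge and the hinge functions of Condition~(ii), and on $\mathcal W$ and $\mathcal Y$ enclose the lower, respectively upper, sublevel sets of the test functions in lower-right, respectively upper-right, orthants, finishing with a layer-cake argument based on Conditions~(iii) and~(iv).

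One correction concerns the degenerate case $p_{sb}^*-p_g^*=\overline x_2$, which forces $p_g^*=\underline x_1$, $p_{sb}^*=\underline x_1+\overline x_2$ and $\mathcal Y=\emptyset$. There the closure property only puts into $S$ the part of the top edge $T$ lying to the right of a point attaining $b$, so your proposed fix fails. It is not needed, though: Condition~(iii) at $x=(p_g^*,x_2)$ with $x_2\uparrow\overline x_2$ gives $-\mu(\mathcal W\cap T)\ge 0$. Since the density $(\overline x_2+k)f$ has constant sign, $T$ then carries no positive mass, and your enclosure step goes through unchanged.
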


The sufficiency argument for $\mathcal{Z}$ parallels that in \Cref{theo:good-only}, 
relying on the nonpositivity of interior mass (MM) and the nonpositive density along the left edge, noting that there is no top edge to account for in this case. 
The arguments for $\mathcal{W}$ and $\mathcal{Y}$ follow those in \Cref{theo:good-only} and \Cref{theo:sbsuff}, respectively, with minor adjustments. 
For $\mathcal{W}$, the top edge carries no linear mass—unlike in \Cref{theo:good-only}, where the sign of the top-edge density played a crucial role.
For $\mathcal{Y}$, the absence of bottom-edge linear mass allows us to bypass the step in \Cref{theo:sbsuff} where the negative bottom edge played a key role. 
In both cases, Conditions~(iii) and~(iv) ensure that the sufficiency condition from \Cref{lem:finitechar} is satisfied.
The detailed proof is omitted for brevity.

\begin{remark}\label{rem:bounds}
\upshape
Propositions \ref{prop:good-only} and \ref{prop:sbnec} establish that when $|\overline{x}_2| < k < |\underline{x}_2|$, neither the Good-Only nor the Single-Bundle Posted Price mechanism can be optimal. 
The Ad-Tiered Posted Price mechanism, in contrast, places no restriction on $k$ and can therefore be optimal over a broader range of parameters. 
In \Cref{dis:partition}, we show that when $f$ is uniform in bad dimension, all three canonical mechanisms align neatly with the role of the third--party payment.

\noindent
Separately, the issue noted in \Cref{rem:convexity}—the inability to extend the necessary conditions to sufficiency—persists here. 
At the core of this limitation lies the difficulty of fully capturing two-dimensional convexity: 
for any value of~$k$, the structure of~$\mu$ ensures that in each canonical mechanism there remains at least one region in which dimensionality cannot be reduced.
Consequently, the sufficiency results continue to rely on the orthant conditions.
\end{remark}

We now present an example demonstrating optimality of the \emph{Ad-Tiered Posted Price} mechanism.

\begin{example}\label{ex:ad-tier}
\upshape
Let $X = [1,2] \times [-1,0]$, $\kappa(x) = 0.5$, and let $f$ be uniform on $X$. 
The transformed measure $\mu$ is
\begin{equation*}
\mu(A) 
= \mathbb{I}_{A}((1,-1)) 
+ \int_{\partial X} \mathbb{I}_{A}(x)\,\big((x_1, x_2 + 0.5)\!\cdot\!\hat{n}\big)\,dx 
- 3\int_{X} \mathbb{I}_{A}(x)\,dx.
\end{equation*}
Hence $\mu$ places a point mass $+1$ at $(1,-1)$, a total mass $-3$ uniformly over $X$, 
and has linear densities of $+0.5$ on $x_2=-1$, $+2$ on $x_1=2$, $+0.5$ on $x_2=0$, and $-1$ on $x_1=1$.

\noindent Let $p_{g}^*$ and $p_{sb}^*$ solve
\begin{align*}
1.5p_1^2 - 4p_1 + 2p_2 + 1 &= 0,\\
1.5p_1^2 - 3p_1p_2 - 2.5p_1 + 2p_2 + 2 &= 0,
\end{align*}
obtained by imposing $\mu(\mathcal{W}) = \mu(\mathcal{Y}) = 0$. 
The unique feasible solution is $p_{g}^* \approx 1.12$ and $p_{sb}^* \approx 0.80$.
Condition \textup{(i)} is satisfied by construction.
We verify Conditions \textup{(ii), (iii)}, and \textup{(iv)} in \Cref{excal:ad-tier} and conclude that the \emph{Ad-Tiered Posted Price} mechanism is optimal in this example.
\end{example}

\subsection{Uniform Distribution: Partitioning Third--Party Payment}\label{dis:partition}

Propositions~\ref{prop:good-only} and~\ref{prop:sbnec} establish necessary bounds on the third–party payment~$k$: the \emph{Good-Only Posted Price} mechanism can be optimal only when $k \le |\overline{x}_2|$, whereas the \emph{Single-Bundle Posted Price} requires $k \ge |\underline{x}_2|$.
Although Proposition~\ref{prop:ad-tier} does not itself restrict $k$, we show that when the type distribution is uniform along the bad dimension, optimality of the \emph{Ad-Tiered Posted Price} mechanism arises only for intermediate values of~$k$.
Hence, the magnitude of the third-party payment neatly partitions the space of optimal mechanisms: low~$k$ leads to exclusion of the bad (Good-Only), intermediate~$k$ supports separation (Ad-Tiered), and high~$k$ induces full bundling (Single-Bundle).

\begin{prop}\label{prop:uniform}
If $f$ is uniform in the bad dimension $x_2$ and an \emph{Ad-Tiered Posted Price} mechanism is optimal, then
$|\overline{x}_2| < k \leq |\underline{x}_2|$.
\end{prop}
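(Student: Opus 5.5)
The plan is to test the optimality conditions of Lemma~\ref{lem:finitechar} on the two regions that touch the horizontal edges of $X$. For the upper bound we use $\mathcal{W}$, which contains the bottom edge to the right of $p_g^*$. For the lower bound we use $\mathcal{Y}$, which contains the whole top edge, since $p_{sb}^*\le \underline{x}_1+\overline{x}_2$. In each region we pick a test function that depends on $x_2$ alone and is affine. Such a function is trivially convex and has the right monotonicity.

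The key structural fact comes from uniformity in $x_2$. Write $f(x)=g(x_1)$, so $\partial_2 f=0$. Then, by \eqref{eq:transk}, $\mu$ decomposes into the following pieces:
\begin{itemize}[itemsep=-2pt, topsep=1pt, partopsep=0pt]
\item interior density $-(g'(x_1)x_1+3g(x_1))$, constant in $x_2$;
\item right-edge density $g(\overline{x}_1)\overline{x}_1$ and left-edge density $-g(\underline{x}_1)\underline{x}_1$, both constant in $x_2$;
\item top-edge density $g(x_1)(\overline{x}_2+k)$;
\item bottom-edge density $g(x_1)(|\underline{x}_2|-k)$.
\end{itemize}
Away from the top and bottom edges, the restriction of $\mu$ to any vertical segment therefore has a uniform $x_2$-profile.

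\emph{Step 1: $k\le|\underline{x}_2|$.} Here $\mathcal{W}=(p_g^*,\overline{x}_1]\times[\underline{x}_2,h]$ with $h=p_{sb}^*-p_g^*$, and its menu item is $a=(1,0)$. Admissible tests are therefore convex, nonincreasing in $x_1$ and nondecreasing in $x_2$. Suppose $k>|\underline{x}_2|$. Then the bottom edge of $\mathcal{W}$ carries strictly negative mass $-B<0$. Because $\mu(\mathcal{W})=0$ by Proposition~\ref{prop:ad-tier}(i), the remaining mass of $\mathcal{W}$ (interior plus right edge) totals $B>0$. Moreover, its $x_2$-marginal is uniform on $[\underline{x}_2,h]$. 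Take $u(x)=x_2-\underline{x}_2$. This function vanishes on the bottom edge, so
\[
\int u\,d\mu|_{\mathcal{W}}=B\cdot\tfrac{h-\underline{x}_2}{2}>0 ,
\]
which contradicts Lemma~\ref{lem:finitechar}. One should first note that $\mathcal{W}$ is nondegenerate ($p_g^*<\overline{x}_1$ and $h>\underline{x}_2$); otherwise the tier is absent and the mechanism is not genuinely Ad-Tiered.

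\emph{Step 2: $k>|\overline{x}_2|$.} On $\mathcal{Y}$ the menu item is $a=(1,1)$, so tests are convex and nonincreasing in both coordinates. Suppose $k\le|\overline{x}_2|$. Then the top-edge mass $T\le 0$, and $\mu(\mathcal{Y})=0$ forces the non-top mass of $\mathcal{Y}$ to total $-T\ge0$. Take $u(x)=\overline{x}_2-x_2$, which vanishes on the top edge. Slice $\mathcal{Y}$ into vertical columns $\{x_1\}\times(\ell(x_1),\overline{x}_2]$, where $\ell(x_1)=p_{sb}^*-\max\{x_1,p_g^*\}$. Each column carries non-top mass $m(x_1)$ spread uniformly, so it contributes $m(x_1)\,(\overline{x}_2-\ell(x_1))/2$. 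The columns with $x_1\ge p_g^*$ all have the maximal height $H=\overline{x}_2-p_{sb}^*+p_g^*$; the right edge, with positive mass, sits there. The columns with $x_1<p_g^*$ are strictly shorter and carry the interior and left-edge mass.

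If that mass is nonpositive (this holds under MM and $\underline{x}_1\ge0$), then
\[
\int u\,d\mu|_{\mathcal{Y}}\ \ge\ \tfrac H2(-T)\ \ge 0 .
\]
The inequality is strict whenever $T<0$, or whenever the shorter columns carry any negative mass. Either way this contradicts Lemma~\ref{lem:finitechar}.

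\emph{Main obstacle.} The hard part is Step~2, for two reasons. First, the region $\mathcal{Y}$ is not a rectangle, so the column heights differ. Second, without MM the interior columns over $x_1<p_g^*$ could carry positive mass, which would break the height comparison. To handle this I would first try replacing the test with $u(x)=(\overline{x}_2-x_2)+\lambda\,(p_g^*-x_1)_+$ for a suitable $\lambda\ge0$. This is still convex and nonincreasing in both coordinates, and the extra term is meant to offset any positive interior mass on the short columns. Failing that, I would assume MM, as in the surrounding sufficiency results. I would treat the boundary case $k=|\overline{x}_2|$ (where $T=0$) separately, using the strictly negative left-edge mass when $\underline{x}_1>0$, so that the inequality in the statement is strict.
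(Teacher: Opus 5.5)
Your proof is correct and follows the paper's strategy. Both argue by contradiction through Lemma~\ref{lem:finitechar}, on $\mathcal{Y}$ for the bound $k>|\overline{x}_2|$ and on $\mathcal{W}$ for the bound $k\le|\underline{x}_2|$, using $\mu(R)=0$ and the uniform $x_2$-profile of the non-edge mass. The only difference is the test function: you take affine $x_2$-tests that vanish on the outer (nonpositive) horizontal edge and compute the integral exactly, whereas the paper takes steep exponentials concentrated on a thin strip at the inner boundary.

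Your appeal to MM in Step~2 is the same assumption the paper makes implicitly when it asserts that the interior of $\mathcal{Y}$ carries negative mass. One small correction: $\ell(x_1)$ should read $p_{sb}^*-\min\{x_1,p_g^*\}$, not $\max$.
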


\begin{proof}
Suppose, for contradiction, that $k \le |\overline{x}_2|$.
Then only the right edge of $\mathcal{Y}$ carries strictly positive mass, while the top and left edges, as well as the interior, carry negative mass.
Consider a thin horizontal strip near the bottom boundary of $\mathcal{Y}$ (see \Cref{fig:ad-tier}),
\begin{equation*}
A := \{x \in \mathcal{Y} : x_2 \in [p_{g}^* - p_{sb}^*,\, p_{g}^* - p_{sb}^* + \varepsilon]\}, \qquad \text{for some } \varepsilon > 0.
\end{equation*}
Given the geometry of $\mathcal{Y}$, the uniformity of $f$ in the $x_2$ dimension, and the fact that $\mu(\mathcal{Y}) = 0$, it follows that $\mu(A) > 0$.

Now consider the convex, coordinatewise nonincreasing test function
\begin{equation*}
u_\varepsilon(x_1,x_2) := \exp\!\big(m[(p_{g}^* - p_{sb}^*+\varepsilon) - x_2]\big), \qquad m>0.
\end{equation*}
As in the proof of Proposition~\ref{prop:sbnec}, we can verify that 
$\int u_\varepsilon\, d\mu|_{\mathcal{Y}} > 0$, contradicting Lemma~\ref{lem:finitechar}.

When $k > |\underline{x}_2|$, a symmetric argument applied to region~$\mathcal{W}$ leads to the same contradiction.
Therefore, the stated range $|\overline{x}_2| < k \leq |\underline{x}_2|$ is necessary for the \emph{Ad-Tiered Posted Price} mechanism to be optimal.
\end{proof}

\begin{remark}\label{rem:uniform}
\emph{
In \Cref{ex:fullsurplus}, we have already shown that when $k \leq |\overline{x}_2|$, 
the \emph{Ad-Tiered Posted Price} mechanism yields strictly higher revenue than the \emph{Good-Only Posted Price} mechanism. 
It can also be verified that, under the same parameters, it dominates the optimal \emph{Single-Bundle Posted Price} mechanism. 
Hence, the clean partition of $k$ values in Proposition~\ref{prop:uniform} does not extend to arbitrary densities.}
\end{remark}

\subsection{Economic Interpretation of the Three Regimes}

The sign pattern of the transformed measure~$\mu$ along the top and bottom edges of~$X$ provides a clear geometric view of how the optimal mechanism evolves with the magnitude of the third--party payment~$k$.
Each regime corresponds to a distinct configuration of~$\mu$ and reflects which revenue source—buyer payments or third--party monetization—dominates the seller’s incentives.

\begin{enumerate}
\item {\bf Low $k$: Good-Only Regime}
\subitem \underline{Geometry}:~$\mu$ assigns positive mass to the bottom edge (ad-averse types) and negative mass to the top edge (ad-tolerant types).
\subitem \underline{Interpretation}:~The \emph{Good–Only Posted Price} mechanism can be optimal only in this regime (Proposition \ref{prop:good-only}), since otherwise the top edge—corresponding to ad-tolerant types—would have positive $\mu$.  
In that case, the seller could increase revenue without violating incentive constraints by monetizing these types through ads.  

\noindent The positive bottom-edge mass indicates that relaxing participation for ad-averse users (by withholding the bad) raises revenue, while the negative mass on the top edge implies that ad-tolerant users contribute little through ad consumption.  
Together with the right-bottom orthant positivity condition, this pattern is sufficient for optimality: gains arise primarily from removing ads for ad-averse types (\Cref{theo:good-only}).

\noindent As~$k$ increases, ad-tolerant types become less unprofitable, while ad-averse types remain binding on the IC frontier.  
This creates a tension between exploiting new
ad-based revenue and maintaining incentive alignment. In this transition zone, mixed
mechanisms—where some types are served ads but others can pay to avoid them—can
outperform the pure good-only menu even though $\mu$ remains negative at the top.
\subitem \underline{Outcome}:~The seller either drops the bad entirely or, as~$k$ rises modestly, begins introducing limited ad exposure for ad-tolerant buyers while preserving an ad-free tier.

\item {\bf Intermediate $k$: Ad-Tiered Regime}
\subitem \underline{Geometry}:~Both the top and bottom edges of $X$ carry positive $\mu$-mass.
\subitem \underline{Interpretation}:~Third--party revenue becomes significant.
Positive mass at the top makes ad exposure profitable for ad-tolerant types, while positive mass at the bottom indicates that ad-averse users still warrant higher utility through ad-free access.
The orthant conditions of \Cref{theo:ad-tier} reinforce this separation by jointly favoring allocations where ad-tolerant and ad-averse segments are treated differently, leading naturally to a two-tier structure.
\subitem \underline{Outcome}:~A two-tier (ad-tiered) menu emerges: a lower-price, ad-supported plan for ad-tolerant users and a higher-price, ad-free plan for ad-averse users.
Among the canonical mechanisms we consider, only the Ad-Tiered Posted Price mechanism can be optimal in this regime.

\item {\bf High $k$: Single-Bundle Regime}
\subitem \underline{Geometry}:~The top edge remains positive while the bottom edge turns negative.
\subitem \underline{Interpretation}:~Advertising revenue dominates.
Negative $\mu$ along the bottom edge indicates that raising utility for ad-averse types—by offering them an ad-free option—would reduce the objective, while positive mass at the top rewards keeping ads for ad-tolerant users.
Since incentive compatibility requires $u$ to increase in $x_2$, a uniform allocation of the bad (showing ads to all types) is both feasible and profitable, reinforced by the orthant condition in \Cref{theo:sbsuff}.
As in the low-$k$ regime, some tension remains at intermediate values, where an Ad-Tiered Posted Price mechanism may still be optimal over a narrow range, but the Good-Only Posted Price mechanism cannot.
\subitem \underline{Outcome}:~The mechanism collapses to a single ad-supported bundle at a posted price; premium ad-free options cannot offset the opportunity cost of foregone ad revenue.
\end{enumerate}

\paragraph{Uniform-bad density.}
When $f$ is uniformly distributed in $x_2$ dimension, Proposition \ref{prop:uniform} yields a clean partition by $k$ over the three canonical mechanisms:
\[
\text{Good-Only for } k \le |\overline x_2|,\qquad
\text{Ad-Tiered for } |\overline x_2| < k < |\underline x_2|,\qquad
\text{Single-Bundle for } k \ge |\underline x_2|.
\]
Thus, when attention is restricted to the canonical menus, knowing $k$ alone suffices to identify the optimal regime.


\section{Discussion}\label{sec:discussion}

\subsection{MM Assumption: $k$-Shift}\label{dis:regularity}

We extend the standard \cite{mcafee_mcmillan_1988} condition to
\begin{equation*}
3f(x) + x \cdot \nabla f(x) + k\,\partial_{2}f(x) \ge 0, \qquad k \ge 0,
\end{equation*}
which reduces to the familiar form $3f + x \cdot \nabla f \ge 0$ when $k=0$.
The additional $k$-term modifies the requirement according to the sign of the slope in the second coordinate:
it \emph{relaxes} the inequality where $\partial_{2}f>0$ and \emph{tightens} it where $\partial_{2}f<0$.
The following example illustrates this dependence.

\begin{example}[Truncated log--linear family]
\upshape
Let 
\begin{equation*}
f(x_1,x_2)=C\,e^{-a x_1+b x_2},\qquad (x_1,x_2)\in X=[0,A]\times[-B,0],
\end{equation*}
for constants $A,B,C>0$.
The gradient is 
\begin{equation*}
\nabla f(x_1,x_2)=(-a f,\, b f),
\quad\text{and hence}\quad 
\partial_{2}f = b f.
\end{equation*}
Substituting into the MM condition yields
\begin{equation*}
3f + x \!\cdot\! \nabla f + k\,\partial_{2}f
  = (3 - a x_1 + b x_2 + k b)\,f(x_1,x_2).
\end{equation*}
Because $f>0$, the inequality holds throughout $X$ whenever 
\begin{equation*}
3 - aA - bB + k b \ge 0,
\end{equation*}
which is strictly easier to satisfy than the baseline case $(k=0)$ whenever $b>0$.
Conversely, when $b<0$, the requirement becomes more stringent.
\end{example}

\subsection{Weakening Constant Third--Party Payment}\label{dis:k}
While we argued that the duality and finite--menu characterization of DDT extend to our setting for a general third--party payment function $\kappa(x)$, our analysis of the canonical mechanisms assumed a constant payment $\kappa(x)=k$.
This assumption can be substantially relaxed.

In particular, the proofs of \Cref{theo:good-only} and subsequent results rely only on the signs of $\mu$ along the horizontal boundaries of $X$.
For instance, in \Cref{theo:good-only}, we require that the top edge carry negative and the bottom edge positive mass.
For a general function $\kappa(x)$, it suffices that
\begin{equation*}
(\overline{x}_2+\kappa(x_1,\overline{x}_2))\,f(x_1,\overline{x}_2) \le 0 
\quad\text{and}\quad
(\underline{x}_2+\kappa(x_1,\underline{x}_2))\,f(x_1,\underline{x}_2) \ge 0
\quad\text{for all }x_1,
\end{equation*}
ensuring, respectively, top--edge negativity and bottom--edge positivity.
Analogous sign conditions appear in the other canonical cases as well.

Thus the constant--$k$ specification mainly serves expositional clarity: it keeps the sign of $\mu$ same throughout the top (or bottom) edge, corresponding to the bad dimension $x_2$, while capturing the essential economic interpretation of ad--monetization. 
It also reflects the prevailing structure of subscription markets, where advertiser payments depend on impressions rather than on users’ ad aversion or willingness to pay for the service.

\subsection{Scope, Limitations, and Computational Simplicity}\label{dis:scope}
DDT characterize optimal finite-menu mechanisms through integral inequalities that must hold for all convex, monotone test functions---a condition that, while exact, is not computationally tractable in general. 
Our results build directly on this foundation but illuminate where the main source of intractability lies even in simple environments. 
As highlighted in \Cref{rem:convexity}, the difficulty arises from the need to verify convex dominance in fully two-dimensional regions, where the cone of convex functions lacks a finite or countable generating family.

Nonetheless, our analysis provides meaningful progress in this direction. 
In the Good-Only Posted Price mechanism, we explicitly identify a region ($\mathcal{Z}$) where the dominance conditions reduce to a one-dimensional family of hinge functions, making verification both interpretable and exact. 
For other regions, such as $\mathcal{W}$ and $\mathcal{Y}$, we replace exhaustive convex testing with simple orthant conditions on $\mu$. 
Although these conditions are stronger than necessary, they are readily computable for densities that admit closed-form expressions and still capture the essential geometry of optimality. 
In this sense, our results retain the spirit of DDT: 
for distributions with simple analytical structure, one can recover clear and interpretable optimal mechanisms without relying on numerical optimization or the full convex program.

\subsection{Extensions Beyond Two Dimensions}\label{dis:higherdim}
The results of DDT extend to environments with any number of dimensions, allowing multiple goods and bads. 
Our framework therefore has immediate analogues in richer market settings. 
For instance, a streaming platform such as Disney+ may bundle several ``goods'' (e.g., Hulu, ESPN+, HBO~Max) under a unified subscription, while the ``bad'' remains a single nuisance such as advertising exposure. 
Conversely, digital ecosystems often feature multiple ``bads''---for example, both advertisements and privacy loss---each generating third--party revenue for the seller.

Extending our results to such multi-dimensional environments is, however, not straightforward. 
While the duality and finite-menu characterization of DDT remain valid for any dimension, our proofs of sufficiency rely crucially on geometric arguments specific to the two-dimensional structure of~$\mu$.
Establishing analogous geometric sufficiency conditions in higher dimensions is left for future work.


\section{Conclusion}\label{sec:conclusion}

We show that introducing a bad and third–party payments into an otherwise standard multi–good framework leads to qualitatively new outcomes.
Even this minimal extension reshapes the transformed measure, alters which incentive constraints bind, and yields a clear partition of optimal mechanisms—Good–Only, Ad–Tiered, and Single–Bundle—as ad revenue varies.
Adapting the duality framework of DDT, we derive tractable geometric conditions—orthant tests—that render optimality both transparent and easy to verify.
The results illustrate how small structural changes in screening models can generate rich comparative statics and economically intuitive mechanisms.

\bibliographystyle{ecta}
\bibliography{order}

\appendix

\section{Appendix : Omitted Proofs}

\subsection{Derivation of Expression \eqref{objDiv}}\label{sec:intbyparts}

\begin{align}
\int_X (x_i + {v_{\kappa(x)}}_i)\,\frac{\partial u}{\partial x_i}\, f(x)\,dx
&= \int_{X_{-i}} \int_{X_i} \frac{\partial u}{\partial x_i}\,(x_i + {v_{\kappa(x)}}_i)\, f(x)\,dx_i\,dx_{-i} \notag\\
&= \int_{X_{-i}} \Big\{u(x)\,(x_i + {v_{\kappa(x)}}_i)\,f(x)\Big|_{\underline{x_i}}^{\overline{x_i}}
   - \int_{X_i} u(x)\,\frac{\partial}{\partial x_i}\!\big[(x_i + {v_{\kappa(x)}}_i)\, f(x)\big]dx_i \Big\} dx_{-i} \notag\\
&= \int_{X_{-i}} 
   \Big\{u(x)\,(x_i + {v_{\kappa(x)}}_i)\,f(x)\Big|_{\underline{x_i}}^{\overline{x_i}}\Big\} dx_{-i} \notag\\
&\quad - \int_X u(x)\,\Big[(1+\tfrac{\partial {v_{\kappa(x)}}_i}{\partial x_i})\,f(x)
   + (x_i + {v_{\kappa(x)}}_i)\,\tfrac{\partial f}{\partial x_i}\Big]dx.
\label{eq:ibp_coord}
\end{align}

Summing \eqref{eq:ibp_coord} over $i=1,2$, and noting that 
$\tfrac{\partial {v_{\kappa(x)}}_1}{\partial x_1} = 0$ and $\tfrac{\partial {v_{\kappa(x)}}_2}{\partial x_2} = \partial_2 \kappa(x)$, we obtain
\begin{align*}
\int_X \big[(x + v_{\kappa(x)})\!\cdot\!\nabla u(x)\big] f(x)\,dx 
&= 
\int_{\partial X} u(x)\,f(x)\,\big((x + v_{\kappa(x)})\!\cdot\!\hat n\big)\,dx \notag\\
&\quad - \int_X u(x)\,\Big[\nabla f(x)\!\cdot\!(x + v_{\kappa(x)}) 
   + \big(2 + \partial_2 \kappa(x)\big)f(x)\Big]dx.
\end{align*}

Adding and subtracting 
$\int_X (u(x) - u(\underline{x}))\,f(x)\,dx$ on both sides, we have
\begin{align*}
\int_X \big[(x + v_{\kappa(x)})\!\cdot\!\nabla u(x) 
   - (u(x)-u(\underline{x}))\big] f(x)\,dx
&= u(\underline{x})
+ \int_{\partial X} u(x)\,f(x)\,\big((x + v_{\kappa(x)})\!\cdot\!\hat n\big)\,dx \notag\\
&\quad - \int_X u(x)\,\Big[\nabla f(x)\!\cdot\!(x + v_{\kappa(x)})
   + \big(3 + \partial_2 \kappa(x)\big)f(x)\Big]dx.
\end{align*}


\subsection{Proofs for Good-Only Posted Price}

\subsubsection{Proof of Proposition~\ref{prop:good-only}}\label{appendix:good-only}

\noindent(i) Let $R \subseteq X$ denote the region of types selecting any given menu option. 
By \Cref{lem:finitechar}, $\mu|_R \preceq_{cvx(\vec v)} 0$, which implies $\int_R u \, d\mu \leq 0$ for all convex, $\vec v$-monotone functions $u$. 
Since constant functions are convex and $\vec v$-monotone for any $\vec v$, taking $u \equiv 1$ and $u \equiv -1$ yields $\mu(R) = 0$. 
In particular, $\mu(\mathcal{Z}) = \mu(\mathcal{W}) = 0$.

\medskip

\noindent(ii) Suppose, for contradiction, that $k > |\overline{x}_2|$. 
From the expression \eqref{eq:transk}, the top edge 
\begin{equation*}
T := \{x \in X : x_2 = \overline{x}_2\}
\end{equation*}
then has strictly positive linear density $(\overline{x}_2 + k)f(x)$. 
Hence $\mu(T \cap \mathcal{Z}) > 0$.
Note that since the allocation in $\mathcal{Z}$ is $(0,0)$, the test functions required by \Cref{lem:finitechar} are convex and coordinate-wise nondecreasing.
Consider the family of test functions
\begin{equation*}
u_\delta(x) := \exp\!\left(\tfrac{x_2 - \overline{x}_2}{\delta}\right), \qquad \delta > 0.
\end{equation*}
Each $u_\delta$ is convex and nondecreasing. 
As $\delta \to 0$, we have $u_\delta(x) \to 1$ for $x \in T \cap \mathcal{Z}$ and $u_\delta(x) \to 0$ for $x \in \mathcal{Z} \setminus T$. 
Thus,
\begin{equation*}
\int u_\delta \, d\mu|_{\mathcal{Z}} \;\longrightarrow\; \mu(T \cap \mathcal{Z}) > 0,
\end{equation*}
contradicting \Cref{lem:finitechar}, which requires $\int u \, d\mu|_{\mathcal{Z}} \leq 0$ for all convex, nondecreasing $u$. 
Therefore $k \leq |\overline{x}_2|$.

\medskip

\noindent(iii) Fix $t \in [\underline{x}_1,p_g^*]$ and consider the hinge test function
\begin{equation*}
u_t(x) := (x_1 - t)_+,
\end{equation*}
which is convex and coordinate-wise nondecreasing. By \Cref{lem:finitechar},
\begin{equation}\label{eq:goodnec}
\int u_t \, d\mu|_{\mathcal{Z}} \;\leq\; 0.
\end{equation}
Define the one-dimensional marginal
\begin{equation*}
\mu_1(A) := \mu\big(A \times [\underline{x}_2,\overline{x}_2]\big),
\quad A \subseteq [\underline{x}_1,\overline{x}_1]\ \text{measurable}.
\end{equation*}
Since $u_t$ depends only on $x_1$,
\begin{equation*}
\int u_t(x) \, d\mu|_{\mathcal{Z}} \;=\; \int_{[\underline{x}_1,p_{g}^*]} (x_1 - t)_+ \, d\mu_1(x_1).
\end{equation*}
Moreover, for each $x_1 \in [\underline{x}_1,p_g^*]$,
\begin{equation*}
(x_1 - t)_+ \;=\; \int_t^{p_g^*} \mathbb{I}_{\{x_1 > z\}} \, dz.
\end{equation*}
Substituting yields
\begin{equation}\label{eq:good-only}
\int u_t \, d\mu|_{\mathcal{Z}}
\;=\; \int_{[\underline{x}_1,p_{g}^*]} \!\int_t^{p_g^*} \mathbb{I}_{\{x_1 > z\}} \, dz \, d\mu_1(x_1)
\;=\; \int_t^{p_g^*} \mu_1\big((z,p_g^*]\big)\,dz
\;=\; -\int_t^{p_g^*} M(z)\,dz,
\end{equation}
where in the second equality we use Fubini’s theorem, and in the third we use $M(x_1):=\mu\big([\underline{x}_1,x_1]\times[\underline{x}_2,\overline{x}_2]\big)$ together with $M(p_g^*)=0$, which implies $\mu_1((z,p_g^*])=M(p_g^*)-M(z)=-M(z)$. 
Combining \eqref{eq:goodnec} and \eqref{eq:good-only} yields Condition \textup{(iii)}.

\subsubsection{Proof of \Cref{theo:good-only}}\label{appendix:good-only2}
\noindent {\bf Sufficiency for $\mathcal{Z}$.}
Let $B := [\underline{x}_1,p_{g}^*] \times \{\underline{x}_2\}$ denote the bottom edge of $\mathcal{Z}$.
By MM, Condition (i) of Proposition \ref{prop:good-only} and the fact that the left edge of $\mathcal{Z}$ carries negative $\mu$-mass, the measure $\mu$ assigns nonpositive mass to all points in $\mathcal{Z} \setminus B$.

Fix any $u$ that is convex and nondecreasing on $\mathcal{Z}$, and define
\begin{equation*}
u'(x_1) := u(x_1,\underline{x}_2), \qquad x_1 \in [\underline{x}_1,p_{g}^*].
\end{equation*}
Note that $u'$ is convex and nondecreasing.

Since u is nondecreasing in both coordinates, we have $u(x_1,x_2)\ge u’(x_1)$ for all $(x_1,x_2) \in \mathcal{Z}.$
Consequently, integrating against $\mu$ and using its sign structure gives
\begin{equation}\label{eq:good-only-zsuff}
\int u \, d\mu|_{\mathcal Z}
\;\leq\;
\int_{[\underline{x}_1,p_{g}^*]} u'(x_1) \, d\mu_1(x_1),
\end{equation}
where $\mu_1$ denotes the marginal of $\mu$ on the $x_1$-coordinate.

Condition (iii) of Proposition~\ref{prop:good-only} and Equation~\eqref{eq:good-only} imply that
\begin{equation*}
\int_{[\underline{x}_1,p_{g}^*]} (x_1-t)_+\,d\mu_1(x_1) \le 0 \qquad \text{for all } t \in [\underline{x}_1,p_{g}^*].
\end{equation*}
By Theorem 1.5.7 of \citet{muller2002comparison}, this implies
\begin{equation*}
\int_{[\underline{x}_1,p_{g}^*]} v(x_1)\,d\mu_1(x_1)\le 0
\end{equation*}
for any convex, nondecreasing $v:B \to \mathbb{R}$.  
Therefore, Inequality \eqref{eq:good-only-zsuff} implies $\int u\,d\mu|_{\mathcal{Z}}\le 0$, as required by Lemma~\ref{lem:finitechar} for sufficiency on $\mathcal{Z}$

\medskip

\noindent \noindent {\bf Sufficiency for $\mathcal{W}$.}
The relevant test functions on $\mathcal W$ are convex, nonincreasing in $x_1$ and 
nondecreasing in $x_2$, so their sublevel sets are precisely the \emph{lower sets} defined below.

For a measurable set $\mathcal L\subseteq\mathcal W$, call $\mathcal L$ \emph{lower set} if 
$(x_1,x_2)\in\mathcal L$ and $(x_1',x_2')\in\mathcal W$ with $x_1' \geq x_1$ and $x_2' \leq x_2$
imply $(x_1',x_2')\in\mathcal L$.

Given a lower set $\mathcal L$, define its associated lower–right orthant
\begin{align*}
O_{\mathcal L} &:= [x_{1\mathcal L},\overline x_1]\times[\underline x_2,x_{2\mathcal L}],\\
x_{1\mathcal L} &:= \inf\{x_1:(x_1,\underline x_2)\in\mathcal L\}, \\
x_{2\mathcal L} &:= \sup\{x_2:(\overline x_1,x_2)\in\mathcal L\}.
\end{align*}
Then $\mathcal L\subseteq O_{\mathcal L}\subseteq\mathcal W$.

By MM, Condition (ii) of Proposition \ref{prop:good-only} , the added region $O_{\mathcal L}\setminus\mathcal L$ has nonpositive $\mu$–mass,
hence
\begin{equation*}
 \mu|_{\mathcal W}(O_{\mathcal L}) \le \mu|_{\mathcal W}(\mathcal L).
\end{equation*}
By Condition (iv), every lower–right orthant in $\mathcal W$ has nonnegative mass,
hence $\mu|_{\mathcal W}(O_{\mathcal L})\ge 0$. Therefore
\begin{equation}\label{eq:SEorthant}
\mu|_{\mathcal W}(\mathcal L)\ \ge\ 0.
\end{equation}

Now we adapt the layer–cake step of \cite{muller2002comparison}, Theorem 3.3.4, to our lower sets.
Let $u:\mathcal W\to\mathbb R$ be bounded, nonincreasing in $x_1$ and nondecreasing in $x_2$, and set $\gamma := \sup_{x \in \mathcal{W}} u(x)$.
For any real $\alpha$, define the lower level set $L_{\alpha} := \{x \in \mathcal{W}: u(x) \leq \alpha \}$.
Each $L_{\alpha}$ is a lower set, so by \eqref{eq:SEorthant} we have $\mu|_{\mathcal{W}}(L_{\alpha}) \geq 0$.

For $n \in \mathbb{N}$, define
\begin{equation*}
u_n(x) := \gamma - \frac{1}{n}\sum_{k=0}^{\infty} \mathbb{I}_{L_{\gamma - k/n}}(x)
\end{equation*}

For fixed $x$, only those $k$ with $k/n \le \gamma-u(x)$ contribute to the sum, so $u_n(x)$ is well defined and finite.
Moreover, by construction,
$0 \le u(x)-u_n(x) < \tfrac{1}{n}$, for all $x\in\mathcal W$.
So $u_n\to u$ uniformly on $\mathcal{W}$.

Now integrate against $\mu|_{\mathcal{W}}$. 
\begin{equation*}
\int u_n \, d\mu|_{\mathcal{W}} = \gamma \mu|_{\mathcal{W}}(\mathcal{W}) - \frac{1}{n}\sum_{k=0}^{\infty} \mu|_{\mathcal{W}}(L_{\gamma-k/n})
= -\frac{1}{n}\sum_{k=0}^{\infty} \mu|_{\mathcal{W}}(L_{\gamma-k/n}) \leq 0.
\end{equation*}

The second equality follows from $\mu|_{\mathcal{W}}(\mathcal{W}) = 0$ (by Condition (i) of Proposition \ref{prop:good-only}), while the inequality uses that $\mu|_{\mathcal{W}}(L_{\alpha}) \ge 0$ for all lower sets $L_{\alpha}$.

Finally, since $u_n \to u$ uniformly and $\mu|_{\mathcal {W}}$ is a finite signed measure, we can pass to the limit under the integral:
\begin{equation*}
\int u \, d\mu|_{\mathcal{W}} = \lim_{n\to\infty} \int u_n \, d\mu|_{\mathcal {W}} \;\le 0.
\end{equation*}

By \Cref{lem:finitechar}, this establishes the required sufficiency condition for $\mathcal W$.


\subsection{Proofs for Single-Bundle Posted Price}

\subsubsection{Proof of Proposition \ref{prop:sbnec}}\label{appendix:sbnec}

On the right edge
\begin{equation*}
R := \{\overline{x}_1\}\times[\underline{x}_2,\overline{x}_2],
\end{equation*}
the measure $\mu$ has line density $\overline{x}_1 f(x)$, which is strictly positive. 
Therefore, note that $\mu(A)>0$ for all non-empty measurable $A\subseteq R$.

We proceed by cases according to the parameter $k$.

\noindent\textbf{Case 1: $k < |\underline{x}_2|$.}

\underline{Case 1a: $p_{sb}^* > \overline{x}_1 + \underline{x}_2$.}  
Then the vertical segment
\begin{equation*}
A := \{\overline{x}_1\}\times[\underline{x}_2,\,p_{sb}^*-\overline{x}_1] \subseteq R
\end{equation*}
lies in $\mathcal{Z}$ and satisfies $\mu(A)>0$. 
But by Lemma~\ref{lem:finitechar}, $\int u\,d\mu|_{\mathcal{Z}} \leq 0$ for all convex, nondecreasing $u$. 
Take
\begin{equation*}
u_\delta(x) := \exp\!\left(\frac{x_1-\overline{x}_1}{\delta}\right),\quad \delta>0.
\end{equation*}
This function is convex and nondecreasing. 
Moreover, $\int u_\delta\,d\mu|_{\mathcal{Z}} \to \mu(A)>0$ as $\delta\to 0$, a contradiction. 
Hence this sub-case is impossible.

\underline{Case 1b: $p_{sb}^* < \overline{x}_1 + \underline{x}_2$.}

On $B$
\begin{equation*}
B := [\underline{x}_1,\overline{x}_1]\times\{\underline{x}_2\},
\end{equation*}
the measure $\mu$ has line density $-(\underline{x}_2+k)f(x)$. 
Since $\underline{x}_2+k<0$ in this case, $\mu(A)>0$ for all measurable $A\subseteq B$.
  
The horizontal segment 
\begin{equation*}
A := [\,p_{sb}^*-\underline{x}_2,\;\overline{x}_1]\times\{\underline{x}_2\} \subseteq B
\end{equation*}
lies in $\mathcal{Y}$ in this sub-case and satisfies $\mu(A)>0$. 
By Lemma~\ref{lem:finitechar}, $\int u\,d\mu|_{\mathcal{Y}} \leq 0$ for all convex, nonincreasing $u$. 
Consider
\begin{equation*}
u_\delta(x) := \exp\!\left(\frac{\underline{x}_2-x_2}{\delta}\right).
\end{equation*}
This function is convex and nonincreasing, and $\int u_\delta\,d\mu|_\mathcal{Y} \to \mu(A)>0$ as $\delta\to 0$, again a contradiction. 
Thus this sub-case is also impossible.

\underline{Case 1c: $p_{sb}^* = \overline{x}_1 + \underline{x}_2$.}

On $B$, the line density is $-(\underline{x}_2 + k)f(x_1,\underline{x}_2)>0$. 
By continuity of $f$ there exist $m_0>0$ and $\varepsilon_0>0$ such that 
$-(\underline{x}_2 + k)f(x_1,\underline{x}_2) \geq m_0$ for all 
$x_1\in[\overline{x}_1-\varepsilon_0,\overline{x}_1]$.
For $0<\varepsilon<\varepsilon_0$, define the corner region
\begin{equation*}
\mathcal{C}_{\varepsilon} 
:= \big([\overline{x}_1 - \varepsilon,\,\overline{x}_1] \times [\underline{x}_2,\,\underline{x}_2 + \varepsilon]\big)\cap \mathcal{Z}.
\end{equation*}
Since the interior density of $\mu$ is bounded by some constant $C>0$, its contribution over $\mathcal{C}_\varepsilon$ is at most $C \varepsilon^{2}$ in magnitude, while the bottom-edge contribution is at least $m_0\varepsilon$.
Hence
\begin{equation*}
\mu(\mathcal{C}_\varepsilon) \ge m_0\,\varepsilon - C\,\varepsilon^2 > 0
\quad \text{for sufficiently small }\varepsilon>0.
\end{equation*}

Consider the admissible test function
\begin{equation*}
u_\varepsilon(x_1,x_2) := \exp\!\big(m[x_1-(\overline{x}_1-\varepsilon)]\big), \qquad m>0.
\end{equation*}
Then $u_\varepsilon$ is convex and coordinate-wise nondecreasing, satisfies $u_\varepsilon \ge 1$ on $\mathcal C_\varepsilon$, and
\begin{equation*}
u_\varepsilon(x_1,x_2) \le e^{-m\varepsilon}
\quad \text{for all } (x_1,x_2)\in \mathcal Z\setminus \mathcal C_\varepsilon.
\end{equation*}
Decomposing $\mathcal Z=\mathcal C_\varepsilon\cup(\mathcal Z\setminus\mathcal C_\varepsilon)$ gives
\begin{equation*}
\int u_\varepsilon\,d\mu|_{\mathcal Z} \geq
\int u_\varepsilon\,d\mu|_{\mathcal C_\varepsilon}
+
\int u_\varepsilon\,d\mu|_{\mathcal Z\setminus\mathcal C_\varepsilon}
\ge
\mu(\mathcal C_\varepsilon)
-
e^{-m\varepsilon}|\mu(\mathcal Z\setminus\mathcal C_\varepsilon)|.
\end{equation*}
As shown above, $\mu(\mathcal C_\varepsilon) > 0$ for sufficiently small $\varepsilon>0$, while $|\mu(\mathcal Z\setminus\mathcal C_\varepsilon)|<\infty$. 
Choosing $m$ large enough so that $e^{-m\varepsilon}|\mu(\mathcal Z\setminus\mathcal C_\varepsilon)| < \mu(\mathcal C_\varepsilon)$ yields
\begin{equation*}
\int u_\varepsilon\,d\mu|_{\mathcal Z} > 0,
\end{equation*}
contradicting Lemma~\ref{lem:finitechar}.
Hence this sub-case cannot occur.
Since sub-cases 1a–1c exhaust all possibilities when $k < |\underline{x}_2|$, this entire case is ruled out.

\noindent\textbf{Case 2: $k \geq |\underline{x}_2|$.} 

\underline{Case 2a: $p_{sb}^* > \overline{x}_1 + \underline{x}_2$.}  
As in Case~1a, this places a positive mass segment of $R$ inside $\mathcal{Z}$, violating Lemma~\ref{lem:finitechar}. 
Hence this sub-case cannot occur. 

\underline{Case 2b: $p_{sb}^* \leq \overline{x}_1 + \underline{x}_2$.}  

On the top edge 
\begin{equation*}
T := [\underline{x}_1,\overline{x}_1]\times\{\overline{x}_2\},
\end{equation*}
the measure $\mu$ has line density $(\overline{x}_2+k)f(x)$, which is strictly positive. 
Thus $\mu(A)>0$ for all measurable $A\subseteq T$.

Suppose further that $p_{sb}^* > \underline{x}_1+\overline{x}_2$. 
Then the segment $A = [\underline{x}_1, p_{sb}^*-\overline{x}_2] \times \{\overline{x}_2\} \subseteq T$ lies inside $\mathcal{Z}$ and $\mu(A) > 0$. 
Consider
\begin{equation*}
u_\delta(x) := \exp\!\left(\frac{x_2-\overline{x}_2}{\delta}\right).
\end{equation*}
This function is convex and non-decreasing. 
As $\delta\to 0$, $\int u_\delta\,d\mu|_{\mathcal{Z}}$ converges to $\mu(A)$, which is strictly positive—contradicting Lemma~\ref{lem:finitechar}. 
Therefore $p_{sb}^* \le \underline{x}_1+\overline{x}_2$.

Combining both bounds gives
\begin{equation*}
p_{sb}^*\le\min\{\overline{x}_1+\underline{x}_2,\;\underline{x}_1+\overline{x}_2\},
\end{equation*}
completing the proof.

\subsubsection{Proof of \Cref{theo:sbsuff}}\label{appendix:sbsuff}

Condition~(ii) of Proposition~\ref{prop:sbnec} ensures that the bottom edge $B$ carries negative linear density under~$\mu$. 
Along the left edge $L := \{\underline{x}_1\} \times [\underline{x}_2,\overline{x}_2]$, the linear density equals $-\underline{x}_1 f(x)$, which is nonpositive. 
The MM condition further guarantees that the interior of $X$ carries nonpositive mass. 
Together with $p_{sb}^* \leq \min \{\underline{x}_1 + \overline{x}_2,\;\overline{x}_1 + \underline{x}_2 \}$ from Condition~(ii), these properties imply that the only positive mass in $\mathcal{Z}$ lies at its coordinate-wise lowest point $(\underline{x}_1,\underline{x}_2)$. 
Hence $\int u\,d\mu|_{\mathcal{Z}} \leq 0$ for all convex, nondecreasing~$u$, establishing the sufficiency requirement from \Cref{lem:finitechar} on~$\mathcal{Z}$.

Next, fix any upper set $U \subseteq \mathcal{Y}$ and define
\begin{align*}
x_{1U} &:= \inf \{x_1 : (x_1,\overline{x}_2) \in U\},\\
x_{2U} &:= \inf \{x_2 : (\overline{x}_1,x_2) \in U\},
\end{align*}
and let the associated upper-right orthant be $O_U := [x_{1U},\overline{x}_1] \times [x_{2U},\overline{x}_2]$. 
By construction, $U \subseteq O_U \cap \mathcal{Y}$. 
Enlarging $U$ to $O_U \cap \mathcal{Y}$ only adds regions on which $\mu$ has nonpositive mass—by MM for the interior and by the negativity of the bottom and left edge densities—since no top or right edge is added in the process. 
Therefore, $\mu(U) \geq \mu(O_U \cap \mathcal{Y})$. 
Condition~(iii) then implies $\mu(U) \geq 0$.

Finally, Condition~(i) of Proposition~\ref{prop:sbnec} ensures that $\mu(\mathcal{Y}) = 0$. 
As in the layer-cake construction used in the proof of \Cref{theo:good-only}, this normalization allows the positive and negative parts of $\mu|_{\mathcal{Y}}$ to be compared directly. 
We then invoke the upper-set comparison theorem (Theorem~3.3.4 in \cite{muller2002comparison}) and conclude that
\begin{equation*}
\int u\,d\mu|_{\mathcal{Y}} \leq 0
\end{equation*}
for all coordinate-wise nonincreasing convex~$u$.
Thus, the sufficiency condition from \Cref{lem:finitechar} also holds on~$\mathcal{Y}$.

\subsubsection{Verification in \Cref{ex:sb}}\label{excal:sb}

To check Condition \textup{(iii)} of \Cref{theo:sbsuff}, consider three representative cases:
\begin{enumerate}
\item[(a)] \emph{Interior points.}  
For $x\in \operatorname{int}(X)$,
\begin{align*}
\mu([x_1,2]\times[x_2,0]\cap \mathcal{Y}) 
&\;\geq\; \mu([x_1,2]\times[x_2,0]) 
= 3(2-x_1)x_2 - 2x_2 + 1.5(2-x_1)\\
&=\;-3x_1x_2 - 1.5x_1 + 4x_2 + 3\;\geq\; 0.
\end{align*}
The first inequality holds because extending to the full rectangle adds only nonpositive mass (by MM).

\item[(b)] \emph{Left boundary above threshold.}  
If $x_1=1$ and $x_2\ge p_{sb}^*-1$, then
\begin{equation*}
\mu([1,2]\times[x_2,0]\cap \mathcal{Y}) 
= \mu([1,2]\times[x_2,0])
= 2x_2 + 1.5 \;\geq\; 0.
\end{equation*}

\item[(c)] \emph{Left boundary below threshold.}  
If $x_1=1$ and $-1<x_2<p_{sb}^*-1$, then
\begin{equation*}
\mu([1,2]\times[x_2,0]\cap \mathcal{Y}) 
= 2x_2+1.5 + 1.5(-0.54-x_2)^2 + (-0.54-x_2)\;\geq\;0.
\end{equation*}
\end{enumerate}
The case $x=(1,-1)$ is trivial, and other boundary cases with $x_2=-1$ are analogous.

Hence Condition \textup{(iii)} of \Cref{theo:sbsuff} is satisfied in all cases, and the Single-Bundle Posted Price mechanism is optimal in this example.

\subsection{Proofs for Ad-Tiered Posted Price}

\subsubsection{Verification in \Cref{ex:ad-tier}}\label{excal:ad-tier}

Substituting $p_{sb}^* = 0.80$, $M(x_1) = 1.5x_1^2 - 4.9x_1 + 3.6$.
Therefore, 
\begin{equation*}
\int_{t}^{p_{g}^*} M(x_1) = - 0.5t^3 + 2.45t^2 - 3.6t + 1.66  \geq 0 \quad \text{for all }t \in [1, p_{g}^*],
\end{equation*}
verifying Condition \textup{(ii)}.

In verifying Condition \textup{(iii)} we observe that,
\begin{equation*}
\mu([x_1,\overline{x}_1]\!\times\![\underline{x}_2,x_2]\!\cap\!\mathcal{W}) 
= 0.5(2-x_1) + 2(x_2+1) - 3(2-x_1)(x_2+1) \ge 0 
\quad \text{for all } x \in \mathcal{W}.
\end{equation*}

When verifying Condition \textup{(iv)} four cases arise,\\
\noindent Case 1. For $x \in \mathcal{Y}$,
\begin{equation*}
\mu([x_1,\overline{x}_1]\!\times\![x_2,\overline{x}_2]\!\cap\!\mathcal{Y}) 
= 0.5(2-x_1) - 2x_2 + 3(2-x_1)x_2 \ge 0.
\end{equation*}
\noindent Case 2. For $x_1 \in (1,p_{g}^*]$ and $x_2\!\in\![p_{sb}^*\!-\!p_{g}^*,\,p_{sb}^*\!-\!x_1]$,
\begin{equation*}
\mu([x_1,\overline{x}_1]\!\times\![x_2,\overline{x}_2]\!\cap\!\mathcal{Y}) 
= 0.5(2-x_1) - 2x_2 + 3(2-x_1)x_2 + 1.5(p_{sb}^* - x_1 - x_2)^2 \ge 0.
\end{equation*}
\noindent Case 3. For $x_1 = 1$ and $x_2\!\in\![p_{sb}^*\!-\!p_{g}^*,\,p_{sb}^*\!-\!1]$,
\begin{equation*}
\mu([x_1,\overline{x}_1]\!\times\![x_2,\overline{x}_2]\!\cap\!\mathcal{Y}) 
= 0.5 - 2x_2 + 3x_2 + 1.5(p_{sb}^* - 1 - x_2)^2 + p_{sb}^* - 1\ge 0.
\end{equation*}
\noindent Case 4. For $x_1 = 1$ and $x_2\!\in\![\,p_{sb}^*\!-\!1, 0]$,
\begin{equation*}
\mu([x_1,\overline{x}_1]\!\times\![x_2,\overline{x}_2]\!\cap\!\mathcal{Y}) 
= 0.5 - 2x_2 + 3x_2 + x_2\ge 0.
\end{equation*}

\end{document}